\chardef\coloryes=1 
\newtheorem{theorem}{Theorem}[section]
\newtheorem{lemma}[theorem]{Lemma}
\newtheorem*{remark}{Remark}
\theoremstyle{definition}
\newtheorem{definition}{Definition}[section]
\def\indeq{\qquad{}\!\!\!\!}  
\begin{document}

\title{Subsonic flow over a thin airfoil in ground effect}
\author{Mohamed Serry\\
Amjad Tuffaha}
\maketitle
\date{}

\bigskip

\indent Department of Mathematics and Statistics\\
\indent American University of Sharjah\\
\indent Sharjah, UAE\\
\indent e-mail:  mohamedserry91\char'100gmail.com \\
\indent e-mail: atufaha\char'100aus.edu

\begin{abstract}
  In this paper, the problem of compressible flow over a thin airfoil located near the ground is studied. A singular integral equation, also known as Possio equation \cite{Possio}, that relates the pressure jump along the airfoil to its downwash is derived. The derivation of the equation utilizes Laplace transform, Fourier transform,  method of images, and theory of Mikhlin multipliers.  The existence and uniqueness of solution to the Possio equation is verified for the steady state case and an approximate solution is obtained. The aerodynamic loads are then calculated based on the approximate solution. Moreover, the divergence speed of a continuum wing structure located near the ground is obtained based on the derived expressions for the aerodynamic loads.
\end{abstract}
\section{Introduction}
In the last few decades, the fields of aerodynamics and aero-elasticity have bloomed significantly due to advances in computation
power and experimentation.
 As a result, the implementation of analytical methods has receded slightly. Despite their limitations, analytical techniques have contributed significantly in the development of the fields of aerodynamics and aero-elasticity. An important example is the pioneering work of Theodorsen \cite{Theodorsen} who derived closed form expressions of the aerodynamic loads on thin airfoils in incompressible flow using tools from complex analysis. Despite the relative simplicity of the expressions derived by Theodorsen, they have been intensively used by a significant number of researchers to study the aero-elastic stability and control of wing structures (see for example  \cite{aeroelastic control1, aeroelastic control2, aeroelastic control3, aeroelastic control4}). Moreover, researchers have used Theodorsen's work as a basis to develop more accurate aerodynamic models (see for example \cite{Hodges book}).  It must to be noted that there were  also other early works, prior to Theodorsen work, that have  considered analytical expressions of aerodynamic loads in subsonic flow such as \cite{Lance, Sears, Watkins}. Another important example is  the relatively recent work of A.V. Balakrishnan that has revived the interest in analytical techniques in aerodynamics and aero-elasticity (see for example:  \cite{Balakrishnan book, Bal, iLiff paper}) and in particular the Possio equation of aeroelasticity (a generalization of the classical airfoil equation). Balakrishnan has implemented functional analytic techniques intensively to derive and solve singular integral equations from which the aerodynamic loads on thin airfoils in compressible flows can be obtained. Additionally, he studied the aero-elastic stability of wing structures using continuum models. Besides the work of Balakrishnan, there has been a series of recent mathematical works  (see for example: \cite{Carabineanu, Survey paper, Irena and Webster, Polyakov, Shubov, Shubov2, Webster}) which have studied the mathematical aspects (existence, uniqueness, obtaining solutions, and stability) of different aerodynamic and aeroelastic problems. In conclusion, implementation of analytical techniques, despite their limitation and complexity, is significant in studying different aerodynamic and aeroelastic phenomena.
 
	Ground effect is one of the aerodynamic phenomena that received attention from engineers and scientists for many years due to its significance in many applications \cite{ground effect literature}. The ground effect is an aerodynamic phenomenon that can be observed when a flying object is near the ground as the induced lift on the flying object becomes relatively high compared to the lift induced in an open flow. In this paper, approximations of analytical formulas for the aerodynamic lift and moment on a thin airfoil near the ground are obtained. The formulas are obtained by first deriving a singular integral equation, also known as the Possio equation \cite{avanzini}. The integral equation is derived based on linearized compressible potential flow theory, typical section theory, Fourier and Laplace transforms, and theory of Mikhlin multipliers \cite{Balakrishnan book, Balakrishnan Hindawi}. The solvability of the integral equation is then discussed for the steady state case and approximate solutions are then obtained for that case. Finally, the divergence speed of a continuum wing structure in steady compressible flow near the ground is calculated based on the derived approximate solutions.
 
\section{Problem Formulation}
In this work, subsonic flow over a thin airfoil that is located near the ground  is considered. The aim of the study is to derive formulas from which the aerodynamic loads on the airfoil can be calculated.

The airfoil is described by the set 
$\Gamma = [-b,b] \times \{ z=z_{0} \} \subset \mathbb{R}_{xz}^{+}$
at distance $z_{0}>0$ from the ground $z=0$ (see figure \eqref{fig:origional}), and is subject to an airflow with a free stream velocity of $U$ in the positive direction, so that $x=b$ signifies the trailing edge and $x=-b$ is the leading edge of the chord.
 \begin{figure}
\includegraphics[width=5in ,keepaspectratio=true]{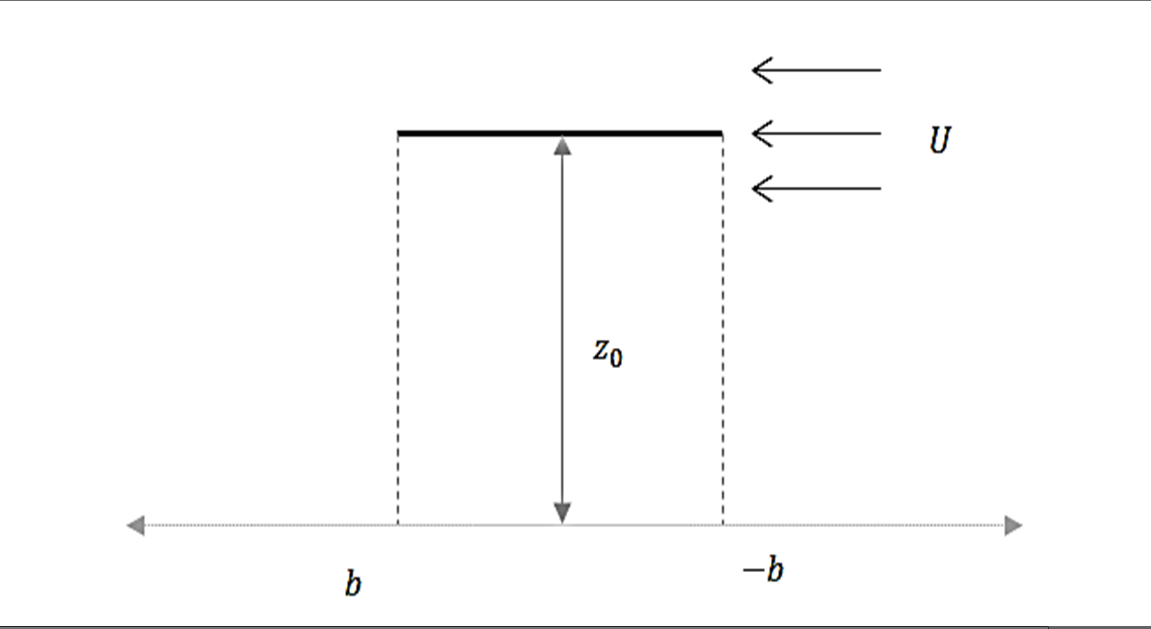}
\centering
\caption{Air flow over a thin  airfoil near the ground}
\label{fig:origional}
\end{figure}
\subsection{Flow Model}

The flow over the airfoil is assumed to be isentropic and following the behavior of ideal gases. Additionally, the movement of the airfoil is assumed to be small compared to the free stream velocity. The flow is assumed to be two dimensional based on typical section theory which is valid for wings with high aspect ratios. Accordingly, the flow over the airfoil is defined by a disturbance potential $\phi(x,z,t)$ that satisfies the differential equation
\begin{equation}\label{eq:flow equation}
\frac{\partial^2}{\partial t^2}\phi+2M a_{\infty}\frac{\partial^2}{\partial t \partial x}\phi=a_{\infty}^2(1-M^2)\frac{\partial^2}{\partial x^2}\phi+a_{\infty}^2\frac{\partial^2}{\partial z^2}\phi,
\end{equation}
where $-\infty<x<\infty$, $0\leq z<\infty$, $t\geq0$, $a_{\infty}$ is the free stream speed of sound and $M=U/a_{\infty}$ is the free stream Mach number. Note that the assumption of potential flow neglects the effects of the boundary layer which become significant when the airfoil is very close to the ground. Therefore, the implementation of the theory discussed in this work is assumed to be valid when the airfoil is at an elevation range in which the boundary layer effects are small but the ground effect is still present. Additionally, equation \eqref{eq:flow equation} is valid for ranges of Mach number between 0 and 0.7 \cite{iLiff report}. To calculate the pressure jump along the airfoil, a reasonable approximation in terms of the acceleration potential $\psi(x,z,t)$ is used where
\begin{equation}\label{eq:acceleration potential}
  \psi=\frac{\partial}{\partial t}\phi+U\frac{\partial}{\partial x}\phi.
\end{equation}The pressure jump term $A(x,t)$ is then defined by
\begin{equation}\label{eq:pressure jump}
A=-\frac{\Delta \psi}{U},
\end{equation}where
\begin{equation}\nonumber
\Delta \psi=\psi(z_{0}^+)-\psi(z_{0}^-).
\end{equation}
Moreover, equation \eqref{eq:flow equation} is supplemented with the following boundary conditions which describe zero normal velocity at the ground,  matching normal flow-structure velocity on the airfoil (flow tangency), zero pressure jump off the wing (Kutta -Joukowski condition), zero pressure jump at the trailing edge of the airfoil (Kutta condition), and vanishing disturbance potential far from the airfoil.
\begin{flalign}\label{eq:boundary conditions}
\text{Zero normal velocity at the ground:}~~&\frac{\partial}{\partial z}\phi=0,~z=0, \\\nonumber
\text{Flow tangency condition:}~~&\frac{\partial}{\partial z}\phi =w_{a},~~ z=z_{0}~~\text{and}~~|x|\leq b,\\\nonumber
\text{Kutta-Joukowski condition:}~~&A(x,t)=0,~|x|>b\\\nonumber
\text{Kutta condition:}~~&\lim_{x\rightarrow{b^-}}A(x,t)=0,\\\nonumber
\text{Vanishing  disturbance potential at infinity :}~~&\lim_{x\rightarrow \pm\infty,z\rightarrow \infty}\phi(x,z,t)=0,
\end{flalign}
where $w_{a}(x,t)$ is the downwash or the normal velocity on the airfoil surface. Note that the Kutta condition insures the uniqueness of solution to the problem under consideration as shall be discussed in the later sections .
\section{Derivation of the Possio Equation}
In this section, an equation  that relates the pressure jump along the airfoil chord to the airfoil downwash, which is referred to as a Possio equation, is derived. The derivation process starts with applying the Laplace transform in the t variable and the Fourier transform in the $x$ variable  to both sides of equation (\ref{eq:flow equation}) to obtain
\begin{equation}\label{eq:transformed flow equation}
\lambda^2 \hat{\hat{\phi}}+2 M a_{\infty}i \omega \lambda \hat{\hat{\phi}}=-a_{\infty}^2(1-M^2)\omega^2\hat{\hat{\phi}}+a_{\infty}^2\frac{\partial^2}{\partial z^2}\hat{\hat{\phi}},
\end{equation}
where $\hat{f}(x,z,\lambda)=\int_{0}^{\infty}e^{-\lambda t}f(x,z,t) \, dt$, $Re(\lambda)\geq \sigma>0$ and $\hat{\hat{f}}(\omega,z,\lambda)=\int_{-\infty}^{\infty}e^{-i \omega x}\hat{f}(x,z,\lambda)\, dx$. Rearranging equation (\ref{eq:transformed flow equation}) results in
\begin{equation}\label{eq:transformed flow equation simplified}
  \frac{\partial^2}{\partial z^2}\hat{\hat{\phi}}=B(\omega,k)\hat{\hat{\phi}},
\end{equation}
where $B(\omega,k)=M^2(k+i\omega)^2+\omega^2$ and $k=\frac{\lambda}{U}$ is the reduced frequency. Since, $Re(k)\geq \sigma >0 $,  it can be shown that the function $B(k,\omega)$ is never $0$. 

Next, the method of images is used to account for the ground effect (see figure~\eqref{fig:image method}). The method assumes an open flow (no ground) and an image of the airfoil to be located at a distance  $-z_{0}$ from the ground axis.
\begin{figure}
\includegraphics[width=5in ,keepaspectratio=true]{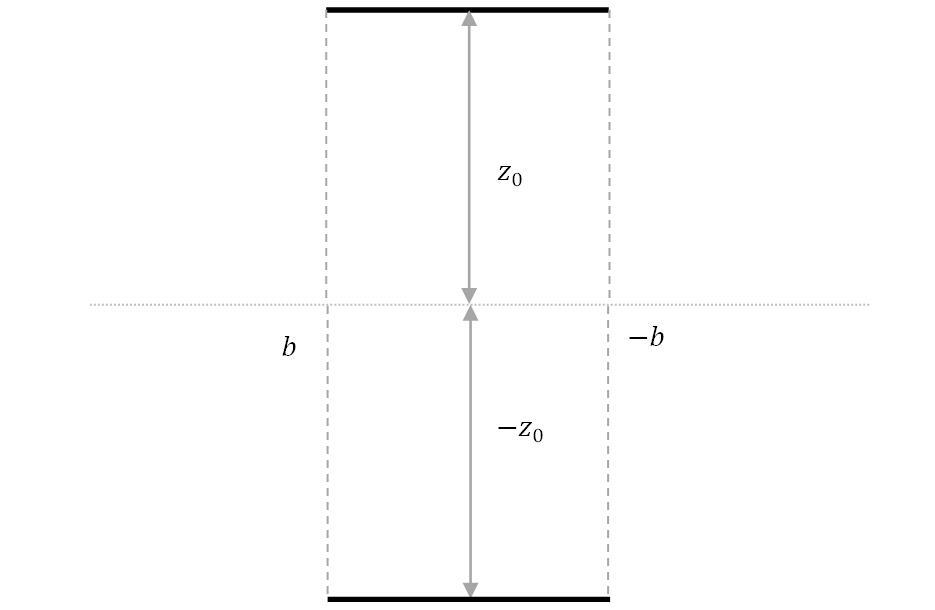}
\centering
\caption{The flow over the airfoil and its image in an open flow}
\label{fig:image method}
\end{figure}
Due to the linearity of equation \eqref{eq:transformed flow equation simplified}, the solution of the flow problem is obtained by studying the open flow over the airfoil and its image separately.
The separate solutions are given by
\begin{eqnarray}\nonumber
\phi_{airfoil}=\begin{cases}
           \hat{\hat{\phi}}(z_{0}^+)e^{-\sqrt{B(\omega,k)}(z-z_{0})}, & z>z_{0} \\
           \hat{\hat{\phi}}(z_{0}^-)e^{\sqrt{B(\omega,k)}(z-z_{0})}, & z<z_{0}
         \end{cases} & \phi_{image}=\begin{cases}\hat{\hat{\phi}}(z_{0}^-)e^{-\sqrt{B(\omega,k)}(z+z_{0})}, & z>-z_{0} \\  \hat{\hat{\phi}}(z_{0}^+)e^{\sqrt{B(\omega,k)}(z+z_{0})}, & z<-z_{0}\end{cases}
\end{eqnarray}
The normal velocities about the axis $z=z_{0}$ are given by
\begin{eqnarray}
\label{v+}
\hat{\hat{v}}_+=&-\sqrt{B(\omega,k)}\left(\hat{\hat{\phi}}(z_{0}^+)+\hat{\hat{\phi}}(z_{0}^-)e^{-2\sqrt{B(\omega,k)}z_{0}}\right),\\
\label{v-}
\hat{\hat{v}}_-=&\sqrt{B(\omega,k)}\hat{\hat{\phi}}(z_{0}^-)\left(1-e^{-2\sqrt{B(\omega,k)}z_{0}}\right).
\end{eqnarray}
The solution to the flow problem should insure the continuity of the velocity field. Therefore, $\hat{\hat{v}}_+=\hat{\hat{v}}_-=\hat{\hat{v}}$. Using the equations in \eqref{v+} and \eqref{v-}, the difference in $\hat{\hat{\phi}}(z_{0}^+)$ and $\hat{\hat{\phi}}(z_{0}^-)$ is expressed in terms of $\hat{\hat{v}}$ as

\begin{equation}\label{eq:transformed difference in the potential}
 \hat{\hat{\phi}}(z_{0}^+)- \hat{\hat{\phi}}(z_{0}^-)=\frac{-2}{\sqrt{B(\omega,k)}\left(1-e^{-2\sqrt{B(\omega,k)}z_{0}}\right)}\hat{\hat{v}}.
\end{equation}
Next, the Fourier and the Laplace Transforms are applied to the acceleration potential in (\ref{eq:acceleration potential}) and the pressure jump term in (\ref{eq:pressure jump}) to obtain
\begin{eqnarray}\label{eq:transformed acceleration potential and pressure jump}
   \hat{\hat{\psi}}=&(\lambda+i U\omega) \hat{\hat{\phi}},\\
   \hat{\hat{A}}=&- \frac{\Delta \hat{\hat{\psi}}}{U}.
\end{eqnarray}\label{}
Using equations \eqref{eq:transformed difference in the potential} and \eqref{eq:transformed acceleration potential and pressure jump}, the pressure jump term $\hat{\hat{A}}$ is represented in terms of the normal velocity $\hat{\hat{v}}$ as 
  \begin{equation}\nonumber
  \hat{\hat{A}}=\frac{(2k+i\omega)}{\sqrt{B(\omega,k)}\left(1-e^{-2\sqrt{B(\omega,k)}z_{0}}\right)}\hat{\hat{v}}.
\end{equation}
Rearranging the above equation yields:
\begin{equation}\label{eq:possio equation in Fourier domain}
\hat{\hat{v}}=\frac{\sqrt{B(\omega,k)}\left(1-e^{-2\sqrt{B(\omega,k)}z_{0}}\right)}{2(k+i\omega)}\hat{\hat{A}}.
\end{equation}
Equation~\eqref{eq:possio equation in Fourier domain} is the desired Possio equation that relates the pressure jump to the normal velocity at $z_0$ in the Fourier domain. The next step is to obtain an integral equation, based on equation \eqref{eq:possio equation in Fourier domain}, that relates the pressure jump along the airfoil to the airfoil downwash. To accomplish this, we appeal to the theory of Mikhlin multipliers in the next section.

\section{Mikhlin Multipliers}
In this section, the  necessary  definition and theory of Mikhlin multipliers used in this work are presented. The readers are refereed to \cite{Mikhlin} for detailed discussions and proofs.
\begin{definition}\label{def:mikhlin multiplier}
  Let $f$ and $g$ be two functions in $L^{p}(-\infty,\infty)$ where $p>1$ and let their Fourier transforms $F$ and $G$ be related by
  \begin{equation}\nonumber
    G(\omega)=\mu(\omega) F(\omega),
  \end{equation}
  where $\mu$ is $C^{1}$ and satisfies
  \begin{equation}\nonumber
    |\mu(\omega)|+|\omega \mu^{'}(\omega)|<C<\infty
  \end{equation}
  for all $\omega$ except maybe at $\omega=0$. Then $\mu$ is called a Mikhlin multiplier
\end{definition}
\begin{theorem}
  If $f$ and $g$ are two functions defined and related as in definition \ref{def:mikhlin multiplier}, then there exists a bounded linear operator $T:L^{p}(-\infty,\infty)\rightarrow L^{p}(-\infty,\infty) $ where $p>1$ such that
  \begin{equation}\nonumber
    g=T(f).
  \end{equation}
\end{theorem}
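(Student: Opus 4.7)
The plan is to construct the operator $T$ first on $L^2(-\infty,\infty)$ using the Fourier transform directly, and then extend it to $L^p$ for all $p>1$ via Calderón-Zygmund theory. Since $\mu$ is bounded by the Mikhlin condition, the map $T_0 f := \mathcal{F}^{-1}(\mu\,\mathcal{F} f)$ is well-defined and bounded on $L^2$ with $\|T_0\|_{L^2\to L^2}\le C$ by Plancherel's theorem. The heart of the matter is to show that $T_0$ extends to a bounded operator on $L^p$ for all $1<p<\infty$, and that on the dense subspace where both interpretations make sense (e.g.\ Schwartz functions) the two agree, so that the operator $T$ of the theorem is just this extension.

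Next I would interpret $T_0$ as a singular integral operator associated with the kernel $K$ obtained as the inverse Fourier transform of $\mu$ in the distributional sense, so that formally $T_0 f(x) = \int K(x-y) f(y)\,dy$ for $x$ outside the support of $f$. The strategy is then to verify the two standard Calderón-Zygmund kernel conditions: a size bound $|K(x)| \le C/|x|$ and a Hörmander-type smoothness estimate $\int_{|x|\ge 2|y|} |K(x-y)-K(x)|\,dx \le C$, uniformly in $y$. Both will follow from a dyadic decomposition of $\mu$: write $\mu = \sum_{j\in\mathbb{Z}} \mu_j$ where $\mu_j$ is supported in the annulus $2^{j-1} \le |\omega| \le 2^{j+1}$, estimate $\|\mu_j\|_{L^\infty}$ and $\|\mu_j'\|_{L^1}$ directly from the hypothesis $|\mu(\omega)|+|\omega \mu'(\omega)|\le C$, and transfer these bounds to the inverse Fourier side via the obvious inequalities $\|\widehat{\mu_j}\|_{L^\infty}\le \|\mu_j\|_{L^1}$ and $\||x|\widehat{\mu_j}\|_{L^\infty}\le \|\mu_j'\|_{L^1}$.

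Once the kernel estimates are in hand, the rest is classical. Using the Calderón-Zygmund decomposition at height $\alpha$, together with the $L^2$ boundedness from Plancherel, I would derive the weak-type $(1,1)$ estimate $|\{x: |T_0 f(x)|>\alpha\}|\le C\|f\|_{L^1}/\alpha$. Marcinkiewicz interpolation between weak $(1,1)$ and strong $(2,2)$ then yields strong $(p,p)$ for all $1<p<2$, and a duality argument (noting that the adjoint corresponds to the multiplier $\overline{\mu}$, which satisfies the same hypothesis) handles $2<p<\infty$. This produces the desired bounded linear $T:L^p\to L^p$, and uniqueness of the extension from the $L^p\cap L^2$-dense subspace guarantees that it satisfies $g=Tf$ whenever $f,g\in L^p$ are related as in Definition~\ref{def:mikhlin multiplier}.

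The main obstacle is the passage from the pointwise bound $|\mu(\omega)|+|\omega\mu'(\omega)|\le C$ to the kernel estimates on $K$, since $K$ itself is only a tempered distribution and the naive integral $\int \mu(\omega) e^{i\omega x}\,d\omega$ does not converge absolutely. This is precisely where the dyadic decomposition and the careful cancellation arguments of Littlewood-Paley type are essential, and this is the technical core of the proof; after that step, the weak-type estimate, Marcinkiewicz interpolation, and duality are standard machinery whose details I would cite from \cite{Mikhlin} rather than reproduce.
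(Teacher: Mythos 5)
The paper does not actually prove this theorem: its ``proof'' is a one-line deferral to Mikhlin's monograph \cite{Mikhlin}. Your proposal, by contrast, sketches the standard modern Calder\'on--Zygmund proof of the (one-dimensional) Mikhlin--H\"ormander multiplier theorem: $L^2$ boundedness from Plancherel, kernel estimates via a dyadic decomposition of $\mu$, a weak-type $(1,1)$ bound from the Calder\'on--Zygmund decomposition, Marcinkiewicz interpolation, and duality. The overall architecture is correct and is essentially the argument behind the cited result, so in that sense you have supplied the proof the paper omits.

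There is, however, one step in your sketch that does not work as stated. You claim that both the size bound $|K(x)|\le C/|x|$ and the H\"ormander condition follow from the inequalities $\|\widehat{\mu_j}\|_{L^\infty}\le\|\mu_j\|_{L^1}$ and $\||x|\widehat{\mu_j}\|_{L^\infty}\le\|\mu_j'\|_{L^1}$. These give $|\widehat{\mu_j}(x)|\le C\min(2^j,|x|^{-1})$, and summing this over $j\in\mathbb{Z}$ diverges (infinitely many terms contribute $\approx|x|^{-1}$ each), so the pointwise bound on $K$ does not follow; worse, a pointwise bound of order $|x|^{-1}$ is not integrable at infinity, so it cannot by itself yield the estimate $\int_{|x|\ge 2|y|}|K(x-y)-K(x)|\,dx\le C$. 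With only one derivative of $\mu$ available (which is all the hypothesis provides, and all that is needed in dimension one), the correct route is through Plancherel on the multiplier side: $\|\widehat{\mu_j}\|_{L^2}\lesssim\|\mu_j\|_{L^2}\lesssim 2^{j/2}$ and $\||x|\widehat{\mu_j}\|_{L^2}\lesssim\|\mu_j'\|_{L^2}\lesssim 2^{-j/2}$, followed by Cauchy--Schwarz over the regions $|x|\le 2^{-j}$ and $|x|\ge 2^{-j}$ to control $\int|\widehat{\mu_j}(x-y)-\widehat{\mu_j}(x)|\,dx$ by $\min\bigl(1,(2^j|y|)^{1/2}\bigr)$ up to constants, which is summable in $j$. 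The pointwise bound $|K(x)|\le C/|x|$ should simply be dropped from the plan: it is not a consequence of the hypotheses and is not needed, since $L^2$ boundedness together with the H\"ormander condition already drives the Calder\'on--Zygmund machinery. With that repair, your argument is complete and, unlike the paper, self-contained.
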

\begin{proof}
The detailed proof is discussed in the work of Mikhlin \cite{Mikhlin}.
\end{proof}
The next step is to show that the multiplier $\frac{\sqrt{B(\omega,k)}(1-e^{-2\sqrt{B(\omega,k)}z_{0}})}{2(k+i\omega)}$  in equation (\ref{eq:possio equation in Fourier domain}), which we denote by $\gamma(\omega,k)$, is a Mikhlin multiplier.
\begin{theorem}
  $\gamma (\omega)$ is a Mikhlin multiplier
\end{theorem}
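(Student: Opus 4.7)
The plan is to verify the two requirements of Definition~\ref{def:mikhlin multiplier} for $\gamma(\omega)=\gamma(\omega,k)$, with $k$ regarded as a fixed parameter satisfying $\mathrm{Re}(k)\geq\sigma>0$ throughout. First I would establish that $\gamma$ is actually smooth on all of $\mathbb{R}$, so that the apparent singularities of its defining formula do not sit on the real axis; then I would verify the uniform bound $|\gamma(\omega)|+|\omega\gamma'(\omega)|\leq C$ by an asymptotic analysis at $|\omega|\to\infty$, since near $\omega=0$ the smoothness will make the bound automatic.

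For the smoothness step I would check the three candidate singularities in turn. The factor $k+i\omega$ never vanishes for real $\omega$ because $\mathrm{Re}(k)>0$. The expression $B(\omega,k)=M^{2}(k+i\omega)^{2}+\omega^{2}$ is nonzero on $\mathbb{R}$: otherwise $M(k+i\omega)=\pm i\omega$, which forces $Mk=i(\pm 1-M)\omega\in i\mathbb{R}$, contradicting $\mathrm{Re}(k)>0$. A quantitative refinement of the same argument will show that $B(\omega,k)$ avoids the branch cut $(-\infty,0]$ of the principal square root uniformly in $\omega\in\mathbb{R}$, so $\sqrt{B}$ is holomorphic near the real axis and $\mathrm{Re}\sqrt{B}>0$. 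Consequently $|e^{-2\sqrt{B}z_{0}}|<1$, the factor $1-e^{-2\sqrt{B}z_{0}}$ never vanishes, and $\gamma\in C^{\infty}(\mathbb{R})$.

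For the uniform bound I would split
\[
\gamma(\omega)=\gamma_{1}(\omega)\bigl(1-e^{-2\sqrt{B(\omega,k)}\,z_{0}}\bigr),\qquad \gamma_{1}(\omega):=\frac{\sqrt{B(\omega,k)}}{2(k+i\omega)},
\]
and exploit the algebraic identity $\gamma_{1}(\omega)^{2}=\tfrac{M^{2}}{4}+\tfrac{\omega^{2}}{4(k+i\omega)^{2}}$: as $|\omega|\to\infty$ the right-hand side tends to $(M^{2}-1)/4$, so $\gamma_{1}$ is bounded on $\mathbb{R}$, and together with $|1-e^{-2\sqrt{B}z_{0}}|\leq 2$ this bounds $|\gamma|$. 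Differentiating the identity implicitly yields $\gamma_{1}'(\omega)=k\omega/[4\gamma_{1}(\omega)(k+i\omega)^{3}]=O(\omega^{-2})$, hence $\omega\gamma_{1}'(\omega)\to 0$. For the contribution of the exponential factor, setting $\beta=\sqrt{B}$ gives $\beta'(\omega)=((1-M^{2})\omega+iM^{2}k)/\beta=O(1)$, but $\mathrm{Re}\,\beta\sim\sqrt{1-M^{2}}\,|\omega|$ forces $|e^{-2\beta z_{0}}|$ to decay exponentially, which easily absorbs the polynomial prefactors $\omega\gamma_{1}\beta'$ in the product-rule expansion of $\omega\gamma'(\omega)$. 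Adding the two pieces closes the estimate.

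The main obstacle I anticipate is the branch-cut bookkeeping in the smoothness step: one must show that $B(\omega,k)$ stays uniformly bounded away from the nonpositive real axis as $\omega$ ranges over $\mathbb{R}$, so that $\sqrt{B}$ is analytic and $\mathrm{Re}\sqrt{B}>0$ throughout, which is the fact that drives both the decay of the exponential at infinity and the nonvanishing of $1-e^{-2\sqrt{B}z_{0}}$. Once this geometric property of the image of the quadratic map $\omega\mapsto B(\omega,k)$ is pinned down, the remaining boundedness and derivative estimates reduce to elementary asymptotics.
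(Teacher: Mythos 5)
Your proposal is correct, and it rests on the same factorization the paper uses, namely $\gamma=\alpha\beta$ with $\alpha(\omega)=\frac{\sqrt{B(\omega,k)}}{2(k+i\omega)}$ and $\beta(\omega)=1-e^{-2\sqrt{B(\omega,k)}z_0}$, together with the same mechanism for the exponential factor (the decay $|e^{-2\sqrt{B}z_0}|\sim e^{-2z_0\sqrt{1-M^2}|\omega|}$ absorbing the polynomially growing prefactors in $\omega\gamma'$). The genuine difference is in the first factor: the paper disposes of $\alpha$ by citing \cite{Balakrishnan Hindawi}, whereas you prove the required bounds from scratch via the identity $\alpha(\omega)^2=\frac{M^2}{4}+\frac{\omega^2}{4(k+i\omega)^2}$, which gives $\alpha^2\to\frac{M^2-1}{4}\neq 0$ at infinity (hence boundedness of $\alpha$ and of $1/\alpha$ for large $|\omega|$) and, after implicit differentiation, $\alpha'(\omega)=\frac{k\omega}{4\alpha(\omega)(k+i\omega)^3}=O(\omega^{-2})$; this makes the argument self-contained and is a nice simplification. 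You also make explicit the branch-cut analysis that the paper leaves tacit: it only observes that $B\neq 0$, while your observation that $B(\omega,k)$ avoids all of $(-\infty,0]$ (since $B\in(-\infty,0]$ would force $(k+i\omega)^2$ to be a nonpositive real, i.e.\ $\mathrm{Re}(k)=0$) is what actually justifies the analyticity of $\sqrt{B}$, the inequality $\mathrm{Re}\sqrt{B}>0$, and hence both the bound $|\beta|\leq 2$ and the exponential decay; making this explicit strengthens the exposition. The only small point to flag is that your assertion $\alpha'=O(\omega^{-2})$ uses $1/\alpha=O(1)$, which you should justify by combining the nonvanishing of $B$ on $\mathbb{R}$ (so $\alpha$ has no zeros on any compact set) with the nonzero limit of $\alpha^2$ at infinity.
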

\begin{proof}
$\gamma(\omega)$ can be written as $\gamma(\omega)=\alpha(\omega)\beta(\omega)$ where $\alpha(\omega)=\frac{\sqrt{B(\omega,k)}}{2(k+i\omega)}$ and $\beta(\omega)=1-e^{-2\sqrt{B(\omega,k)}z_0}$. It was shown in \cite{Balakrishnan Hindawi} that $\alpha(\omega)$ is a Mikhlin multiplier. Therefore, it remains to show that $\beta(\omega)$ is a Mikhlin multiplier. 

In particular, the function $|\beta(\omega)|$ satisfies the estimate
\begin{equation}\label{}
  |\beta(\omega)|=|1-e^{-2\sqrt{B(\omega,k)}z_0}|\leq 2.
\end{equation}
The term $|\omega\beta^{'}(\omega)|$ can be written as
\begin{equation}
|\omega\beta^{'}(\omega)|=\frac{|\omega B^{'}(\omega,k)|}{|z_{0}\sqrt{B(\omega,k)}|}|e^{-2\sqrt{B(\omega,k)}z_0}|.
\end{equation}
For a fixed value of $k$, it can be verified, by calculations that are omitted in this paper, that
\begin{itemize}
  \item $|\omega B^{'}(\omega,k)|$ is asymptotically equivalent to $2(1-M^2)\omega^2$.
  \item $|\sqrt{B(\omega,k)}|$ is asymptotically equivalent to $\sqrt{1-M^2}|\omega|$.
  \item $e^{-2\sqrt{B(\omega,k)}z_0}|$ is asymptotically equivalent to $e^{-2z_{0}\sqrt{1-M^2}|\omega|}$.
  \item Therefore, $|\omega\beta^{'}(\omega)|$ is continuous and has convergent limits at infinity, and hence $|\omega\beta^{'}(\omega)|$ is bounded.
\end{itemize}
Therefore, $\beta(\omega)$ is a Mikhlin multiplier.
\end{proof}
Based on the previous discussion, there exists a bounded linear operator $T:L^{p}(-\infty,\infty)\rightarrow L^{p}(-\infty,\infty)$ corresponding to equation \eqref{eq:possio equation in Fourier domain} such that
\begin{equation}\label{eq:general integral equation}
\hat{v}=T(\hat{A}).
\end{equation}
Applying the projection operator $\mathcal{P}:L^{p}(-\infty,\infty)\rightarrow L^{p}[-b,b]$ to both sides of equation \eqref{eq:general integral equation} results in
\begin{equation}\label{eq:general possio equation}
{\hat{w}}_a=\mathcal{P}T(\hat{A}),
\end{equation}
which is the Possio equation that relates the pressure jump to the downwash. The existence and uniqueness of solution to equation \eqref{eq:general possio equation} depends on the properties of the operator $T$. In the next section, the existence and uniqueness of solution to equation \eqref{eq:general possio equation} is discussed for the steady state case.

\section{Special Case k=0}
In this section, the solvability of equation \eqref{eq:possio equation in Fourier domain} is considered in the case of $k=0$, which corresponds to the steady state case. Setting $k=0$ in equation \eqref{eq:possio equation in Fourier domain} results in
\begin{equation}\label{eq:possioequation in Fourier domain k=0}
\hat{\hat{v}}=\frac{\sqrt{1-M^2}|\omega|\left(1-e^{-2\sqrt{1-M^2}|\omega|z_{0}}\right)}{2i\omega}\hat{\hat{A}}.
\end{equation}
The multiplier $|\omega|/i\omega$ corresponds to the Hilbert operator $\mathcal{H}$ which is defined as
\begin{equation}\label{eq:helbert operator}
\mathcal{H}(f)(t)=\frac{1}{\pi}\int_{-\infty}^{\infty}\frac{f(\tau)}{t-\tau}\, d\tau.
\end{equation}
Additionally, the multiplier $e^{-2\sqrt{1-M^2}|\omega|z_{0}}$ corresponds to an integral operator $\mathcal{L}$ that is defined as
\begin{equation}\label{eq:integral operator L}
  \mathcal{L}(f)(t)=\frac{1}{\pi c}\int_{-\infty}^{\infty}\frac{f(\tau)}{1+(\frac{t-\tau}{c})^2} \, d\tau,
\end{equation}
where
\begin{equation}
c=2z_{0}\sqrt{1-M^2}.
\end{equation}
 Note that the parameter $c$ plays an essential role in the existence of solution argument. Based on equation \eqref{eq:possioequation in Fourier domain k=0}, the operators defined in \eqref{eq:helbert operator} and \eqref{eq:integral operator L}, and applying the projection operator, we obtain following integral equation:
\begin{equation}\label{eq:possio integral equation k=0}
  \frac{2}{\sqrt{1-M^2}}w_{a}=\mathcal{P}\mathcal{H}(\mathcal{I}-\mathcal{L})A.
\end{equation}
Note that $A$ vanishes off the chord. Therefore, $A=\mathcal{P}A$. Additionally, the operators $\mathcal{H}$ and $(\mathcal{I}-\mathcal{L})$ commute as their product in the Fourier domain correspond to a Mikhlin multiplier. Implementing the above points and the distributive property of operators in equation \eqref{eq:possio integral equation k=0} results in
\begin{equation}\label{eq:possio integral equation k=0 expanded}
  \frac{2}{\sqrt{1-M^2}}w_{a}=\mathcal{P}\mathcal{H}\mathcal{P}A-\mathcal{P}\mathcal{L}\mathcal{H}\mathcal{P}A.
\end{equation}
The operator $\mathcal{L}\mathcal{H}\mathcal{P}$ is given by
\begin{equation}\label{eq:LHP}
\mathcal{L}\mathcal{H}\mathcal{P}(f)(x)=\frac{1}{\pi c}\int_{-\infty}^{\infty}\frac{1}{1+\left(\frac{x-t}{c}\right)^2}\left(\frac{1}{\pi}\int_{-b}^{b}\frac{f(\tau)}{t-\tau}d\tau\right)\, dt.
\end{equation}
Changing the order of integration in \eqref{eq:LHP} results in
\begin{equation}
\mathcal{L}\mathcal{H}\mathcal{P}(f)(x)=\frac{-1}{\pi c}\int_{-b}^{b}f(\tau)\left(\frac{1}{\pi}\int_{-\infty}^{\infty}\frac{1}{\tau-t}\frac{1}{1+\left(\frac{t-x}{c}\right)^2}dt\right )\, d\tau.
\end{equation}
The term $\frac{1}{\pi}\int_{-\infty}^{\infty}\frac{1}{\tau-t}\frac{1}{1+\left(\frac{t-x}{c}\right)^2}\,dt$ corresponds to the Hilbert operator applied to  $ h(t)=\frac{1}{1+\left(\frac{t-x}{c}\right)^2}$. We then note that $\mathcal{H}(\frac{1}{1+t^2})=\frac{t}{1+t^2}$ and the positive shifting and scaling are preserved under the Hilbert transform. Finally, $\mathcal{P}\mathcal{L}\mathcal{H}\mathcal{P}$ can be written as
\begin{equation}\label{eq:PLHP}
\mathcal{P}\mathcal{L}\mathcal{H}\mathcal{P}(f)(x)=\frac{-1}{\pi c}\int_{-b}^{b}f(\tau)g\left(\frac{\tau-x}{c}\right)\, d\tau,~~~|x|\leq b,
\end{equation}
where $g(t)=\frac{t}{1+t^2}$. The operator $\mathcal{P}\mathcal{H}\mathcal{P}$ corresponds to the finite Hilbert operator $\mathcal{H}_b$ which is given by
\begin{equation}\label{eq:finite hilbert operator}
\mathcal{H}_b(f)(t)=\frac{1}{\pi}\int_{-b}^{b}\frac{f(\tau)}{t-\tau}\, d\tau,~~|x|\leq b.
\end{equation}
The inverse of the finite Hilbert operator exists if the Kutta condition described by the third equation in (\ref{eq:boundary conditions}) is imposed. 

Therefore, the Possio equation has the form
\begin{equation}\label{Final Possio}
  \frac{2}{\sqrt{1-M^2}}w_{a}=(\mathcal{H}_{b}- \mathcal{P}\mathcal{L}\mathcal{H}\mathcal{P})A.
\end{equation}

\subsection{ Inversion of the  Finite Hilbert Transform}

In this work,  the Tricomi operator $\mathcal{T}$ defined by
\begin{equation}\label{eq:tricomi operator}
\mathcal{T}(f)(x):=\frac{1}{\pi}\sqrt{\frac{b-x}{b+x}}\int_{b}^{b}\sqrt{\frac{b+\tau}{b-\tau}}\frac{f(\tau)}{x-\tau}\, d\tau,~~|x|\leq b,
\end{equation}
is used as an inversion formula of the finite Hilbert transform.

\begin{lemma}\label{inversion}
Given $f \in L^{p}[-b,b]$ with $p>4/3$ there exists a solution $g  \in L^{r}[-b,b]$ for all $r < 4/3$ to the equation $$\mathcal{H}_{b}(g)=f.$$
Moreover, for $p>1$, any solution $g$  has the form
\begin{equation}
g(x)= \frac{1}{\pi}\int_{-b}^{b}\sqrt{\frac{b^{2}-y^{2}}{b^{2}-x^{2}}}\frac{f(y)}{x-y}\, dy + \frac{C}{\sqrt{b^{2}-x^{2}}},
\end{equation}
with $C$ being an arbitrary constant.
\begin{proof}
This is a classical result due to Tricomi and Sohngen, \cite{Sohngen, Tricomi}.
\end{proof}
\end{lemma}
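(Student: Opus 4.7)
I would split the lemma into three parts: characterising the null space of $\mathcal{H}_{b}$ (uniqueness modulo a one-parameter family), verifying the Tricomi--S\"ohngen candidate as a right inverse (existence together with the explicit formula), and controlling the $L^{r}$-norm of that candidate. The first two parts use complex-analytic and singular-integral techniques respectively, while the third follows from the $L^{p}$-theory of the Hilbert transform together with H\"older's inequality.

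For the null space, I would first check by direct computation---substituting $y=b\sin\theta$, or equivalently evaluating $\oint dz/\sqrt{z^{2}-b^{2}}$ on a dog-bone contour around $[-b,b]$---that $\mathcal{H}_{b}(1/\sqrt{b^{2}-x^{2}})=0$ on $(-b,b)$. To show no other null solution exists in $L^{r}[-b,b]$ for $r<2$, I would introduce the Cauchy integral $F(z)=\tfrac{1}{2\pi i}\int_{-b}^{b}g(t)/(t-z)\,dt$, which is holomorphic on $\mathbb{C}\setminus[-b,b]$ and vanishes at infinity. The Plemelj--Sokhotski formulas translate $\mathcal{H}_{b}g=0$ into the boundary relation $F_{+}+F_{-}=0$ on $(-b,b)$, so that $F(z)\sqrt{z^{2}-b^{2}}$ (with the branch analytic off $[-b,b]$ and positive as $z\to+\infty$) extends to an entire function; controlling its growth near $\pm b$ by the $L^{r}$-hypothesis and at infinity by $F(\infty)=0$ pins it down as a constant, giving $F(z)=C/\sqrt{z^{2}-b^{2}}$ and hence $g(x)=C/\sqrt{b^{2}-x^{2}}$ via the jump $F_{+}-F_{-}=g$.

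For existence, I would substitute the candidate
\begin{equation*}
 g_{0}(x)=\frac{1}{\pi\sqrt{b^{2}-x^{2}}}\,\mathrm{P.V.}\!\int_{-b}^{b}\sqrt{b^{2}-y^{2}}\,\frac{f(y)}{x-y}\,dy
\end{equation*}
into $\mathcal{H}_{b}$ and interchange the two resulting principal values via the Poincar\'e--Bertrand transposition formula. This produces a distributional diagonal term that collapses to $f(x)$, plus an iterated integral whose inner factor $\tfrac{1}{\pi}\,\mathrm{P.V.}\!\int_{-b}^{b}dt/[(t-y)\sqrt{b^{2}-t^{2}}]$ vanishes for $|y|<b$ by the kernel computation of the previous step. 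Hence $\mathcal{H}_{b}g_{0}=f$, and adding the one-parameter kernel $C/\sqrt{b^{2}-x^{2}}$ yields the full solution family stated in the lemma.

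For the $L^{r}$ bound, I would write $g_{0}(x)=(b^{2}-x^{2})^{-1/2}\tilde h(x)$, where $\tilde h$ is, up to a constant, the full-line Hilbert transform of the compactly supported $L^{p}$ function $\sqrt{b^{2}-y^{2}}f(y)\mathbf{1}_{[-b,b]}(y)$. The $L^{p}$-boundedness of the Hilbert transform gives $\tilde h\in L^{p}[-b,b]$, and since $(b^{2}-x^{2})^{-1/2}\in L^{a}[-b,b]$ for every $a<2$, H\"older's inequality---sharpened near $\pm b$ by the fact that the weight $\sqrt{b^{2}-y^{2}}$ vanishes at the endpoints, which improves the endpoint behaviour of $\tilde h$---yields $g_{0}\in L^{r}[-b,b]$ for all $r<4/3$ under the hypothesis $p>4/3$. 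I expect the main obstacle to be the Poincar\'e--Bertrand interchange in the existence step, since a naive Fubini is illegitimate for these doubly singular kernels and the correct distributional correction on the diagonal is precisely what regenerates $f$.
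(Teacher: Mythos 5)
The paper itself does not prove Lemma \ref{inversion}; it simply cites Tricomi and S\"ohngen. Your proposal therefore supplies an argument the paper omits, and the first two of your three steps are precisely the classical route: the null-space characterization via the Cauchy integral $F(z)$, the Plemelj jump relations, and Liouville's theorem applied to $F(z)\sqrt{z^{2}-b^{2}}$ is the standard Riemann--Hilbert argument (valid for $g\in L^{r}$, $r>1$, which is the regime the lemma works in), and the verification $\mathcal{H}_{b}g_{0}=f$ via the Poincar\'e--Bertrand transposition formula, with the diagonal term regenerating $f$ and the iterated integral killed by $\mathcal{H}_{b}\bigl((b^{2}-t^{2})^{-1/2}\bigr)=0$, is exactly Tricomi's proof. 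Those two steps are sound in outline.

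The genuine gap is in your third step. Writing $g_{0}(x)=(b^{2}-x^{2})^{-1/2}\tilde h(x)$ with $\tilde h\in L^{p}$ (by Riesz) and applying H\"older against $(b^{2}-x^{2})^{-1/2}\in L^{a}[-b,b]$, $a<2$, gives only $\frac{1}{r}>\frac{1}{2}+\frac{1}{p}$, i.e.\ $r<\frac{2p}{p+2}$, and $\frac{2p}{p+2}<\frac{4}{3}$ for every $p<4$; so for $p$ just above $4/3$ this yields only $r<4/5$, far short of the claimed range $r<4/3$. You correctly sense that the vanishing of the weight $\sqrt{b^{2}-y^{2}}$ at the endpoints must be exploited, but you never execute that sharpening, and it is exactly where the exponent $4/3$ comes from. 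The standard fix --- which the paper itself carries out in the proof of Lemma \ref{Tricomi bounded} --- is the algebraic splitting
\begin{equation*}
\sqrt{\frac{b^{2}-y^{2}}{b^{2}-x^{2}}}\,\frac{1}{x-y}
=\frac{1}{x-y}+\frac{1}{\sqrt{b^{2}-x^{2}}}\cdot\frac{x+y}{\sqrt{b^{2}-x^{2}}+\sqrt{b^{2}-y^{2}}}.
\end{equation*}
The first kernel gives the finite Hilbert transform of $f$, which lies in $L^{p}$ by Riesz; the second kernel belongs to $L^{s'}\bigl([-b,b]^{2}\bigr)$ for every $s'<4$, so the corresponding integral operator maps $L^{p}$ into $L^{4-}$ whenever the conjugate exponent satisfies $p'<4$, i.e.\ $p>4/3$, and multiplying by $(b^{2}-x^{2})^{-1/2}\in L^{2-}$ then gives $\frac{1}{r}>\frac{1}{4}+\frac{1}{2}=\frac{3}{4}$, which is exactly $r<4/3$. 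With that replacement your three-step scheme closes; without it the integrability claim of the lemma is not established.
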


Next, a bound for the norm of the Tricomi operator is obtained in the following lemma.
\begin{lemma}\label{Tricomi bounded}
The Tricomi operator $\mathcal{T}$ defined in \eqref{eq:tricomi operator} is bounded from $L^{p}[-b,b]$ to $L^{r}[-b,b]$ for every $p>2$ and $1\leq r <4/3$.
\end{lemma}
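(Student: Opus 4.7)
My plan is to write $\mathcal{T}(f)(x) = v(x)\,\mathcal{H}_b(wf)(x)$, where $v(x) = \sqrt{(b-x)/(b+x)}$ and $w(\tau) = \sqrt{(b+\tau)/(b-\tau)}$, and then reduce the claim to the classical $L^{q}$-boundedness of the finite Hilbert transform $\mathcal{H}_b$ for $1<q<\infty$. The basic structural observation is that both $v$ and $w$ belong to $L^{s}[-b,b]$ for every $s<2$ but for no $s\ge 2$: each carries a $(\text{distance to endpoint})^{-1/2}$ singularity at exactly one endpoint and vanishes like a square root at the other. Crucially, the singularities sit at opposite endpoints ($v$ at $x=-b$, $w$ at $\tau=b$), so the product $v(x)w(\tau)$ is never ``doubly singular'' at the same point.

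\smallskip

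The argument proceeds in three steps. First, for $f\in L^{p}[-b,b]$ with $p>2$, choose $s\in(p/(p-1),\,2)$; the inequality $p/(p-1)<2$, which holds precisely because $p>2$, makes this range non-empty. By H\"older's inequality, $wf\in L^{q}[-b,b]$ with $1/q = 1/s + 1/p$, and the choice of $s$ guarantees $q>1$. Second, invoke the $L^{q}$-boundedness of $\mathcal{H}_{b}$ to get
\[
\|\mathcal{H}_{b}(wf)\|_{q}\le C_{q}\|wf\|_{q}\le C_{q}\|w\|_{s}\|f\|_{p}.
\]
Third, multiply by $v\in L^{s'}$ for some $s'<2$ and apply H\"older a second time to land in $L^{r}$ with $1/r = 1/s' + 1/q$, yielding $\|\mathcal{T}f\|_{r}\le C\|v\|_{s'}\|w\|_{s}\|f\|_{p}$.

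\smallskip

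The hard part is squeezing the exponents so that $r$ may be taken arbitrarily close to $4/3$. A blind optimization taking both $s$ and $s'$ close to $2$ gives $1/r>1/2+1/2+1/p>1$, i.e.\ $r<p/(p+1)<1$, which does not reach the target range. The improvement must come from exploiting the weight asymmetry: near $x=-b$, where $v$ blows up, the combined kernel $w(\tau)/(x-\tau)\big|_{x=-b}$ reduces to $1/\sqrt{b^{2}-\tau^{2}}$, which is in $L^{p'}$ exactly because $p>2$, so $\mathcal{H}_{b}(wf)$ is controlled in a stronger weighted norm near $x=-b$; symmetrically, $v$ vanishes where $w$ is singular. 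My plan is therefore to split $[-b,b]$ into a small neighborhood of $-b$, a small neighborhood of $b$, and a central region; apply the above H\"older-plus-Hilbert estimate on the central region (where both weights are bounded), and use the single-weight boundedness of the appropriate integral in each boundary region. Piecing these estimates together and optimizing the cut-off parameters in terms of $\|f\|_{p}$ is expected to deliver $r<4/3$.

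\smallskip

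The principal obstacle I anticipate is keeping the estimates on the three pieces uniform: the bounds near the endpoints degenerate as the cut-off width shrinks, while the central-region bound deteriorates as it grows, and one must verify that a compatible choice exists that produces a finite constant depending only on $p$ and $r$ (and not on $f$). Once this optimization is carried out, the interpolation between the endpoint-region estimate (which delivers $L^{r}$ for every $r<2$, since $v\in L^{r}$ near $-b$ for such $r$) and the central-region estimate (which delivers $L^{r}$ up to $r=\frac{2p}{p+2}$) produces the stated range $1\le r<4/3$.
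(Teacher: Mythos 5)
Your proposal is a plan rather than a proof: the two steps on which the result actually hinges are deferred (``is expected to deliver $r<4/3$'', ``the principal obstacle I anticipate\ldots''), and as stated they do not close. First, the endpoint claim near $x=-b$ is unsupported. It is true that $w(\tau)/(-b-\tau)=-1/\sqrt{b^{2}-\tau^{2}}\in L^{p'}$ for $p>2$, so $\mathcal{H}_{b}(wf)(-b)$ is a finite number; but this does not control $\mathcal{H}_{b}(wf)$ in $L^{\infty}$, or in any strengthened norm, on a neighborhood of $-b$, because for $x$ slightly inside the interval the Cauchy kernel is again singular at $\tau=x$ and the Hilbert transform of an $L^{q}$ function is only in $L^{q}$. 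To convert the pointwise endpoint cancellation into an estimate you must subtract the endpoint contribution and bound the remainder, and that is precisely the step your sketch omits. Second, the exponent bookkeeping does not reach $4/3$: your central-region bound is $r<2p/(p+2)$, which is below $4/3$ for every $2<p<4$, and the $L^{r}$ norm of a function assembled from three regions is governed by the worst piece --- there is no interpolation gain. So even granting the endpoint claims, your scheme proves the lemma only for $p\geq 4$.

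For comparison, the paper avoids any spatial splitting by using two exact algebraic identities: $\tfrac{b-x}{(b-y)(y-x)}=\tfrac{1}{y-x}+\tfrac{1}{b-y}$ and $\sqrt{b^{2}-y^{2}}-\sqrt{b^{2}-x^{2}}=\tfrac{(x-y)(x+y)}{\sqrt{b^{2}-y^{2}}+\sqrt{b^{2}-x^{2}}}$. These split $\mathcal{T}f$ exactly into (i) the finite Hilbert transform of $f$ (bounded on $L^{p}$ by Riesz), (ii) a rank-one term $\Pi_{0}/\sqrt{b^{2}-x^{2}}$ with $\Pi_{0}=\tfrac{1}{\pi}\int w f$ finite precisely because $p>2$ --- this is the cancellation you correctly sensed at $x=-b$, made rigorous --- and (iii) an operator with the non-singular kernel $\tfrac{x+y}{\sqrt{b^{2}-x^{2}}(\sqrt{b^{2}-x^{2}}+\sqrt{b^{2}-y^{2}})}$, whose double integral to the power $p'$ is finite for $p'<4$. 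The exponent $4/3$ arises from combining that $p'<4$ with $1/\sqrt{b^{2}-x^{2}}\in L^{q'}$ for $q'<2$, via $1/r>1/4+1/2$; your outline contains no substitute for this two-dimensional kernel estimate, which is the heart of the lemma. I would recommend replacing the region decomposition with this kernel decomposition, or else carrying out the subtracted endpoint estimates in full --- at which point you will have essentially rederived it.
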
\label{lemma:T bounded}
\begin{proof}
  $\mathcal{T}(f)$ can be written as the following:
\begin{equation*}
\begin{split}
\mathcal{T}(f)(x)&=\frac{1}{\pi}\sqrt{\frac{b-x}{b+x}}\int_{-b}^{b}\sqrt{\frac{b+y}{b-y}}\frac{f(y)}{y-x}\, dy \\
&=\frac{1}{\pi}\int_{-b}^{b}\sqrt{\frac{b^2-y^2}{b^2-x^2}}\frac{b-x}{b-y}\frac{f(y)}{y-x} \, dy\\
&=\frac{1}{\pi}\int_{-b}^{b}\sqrt{\frac{b^2-y^2}{b^2-x^2}}\frac{f(y)}{y-x}dy+\frac{1}{\pi}\int_{-b}^{b}\sqrt{\frac{b^2-y^2}{b^2-x^2}}\frac{f(y)}{b-y}\, dy\\
&=-\frac{1}{\pi}\frac{1}{\sqrt{b^2-x^2}}\int_{-b}^{b}\frac{(x+y)f(y)}{\sqrt{b^2-x^2}+\sqrt{b^2-y^2}}\, dy\\
&  \indeq  -\frac{1}{\pi}\int_{-b}^{b}\frac{f(y)}{x-y}dy+\frac{1}{\pi}\frac{1}{\sqrt{b^2-x^2}}\int_{-b}^{b}\sqrt{\frac{b+y}{b-y}}f(y) \, dy\\
&=-\frac{1}{\sqrt{b^2-x^2}}\Pi_{1}(x)-\Pi_{2}(x)+\frac{1}{\sqrt{b^2-x^2}}\Pi_{0}
\end{split}
\end{equation*}
where $$\Pi_{1}(x)=\frac{1}{\pi}\int_{-b}^{b}\frac{(x+y)f(y)}{\sqrt{b^2-x^2}+\sqrt{b^2-y^2}}\, dy,$$ 

$$\Pi_{2}(x)=\frac{1}{\pi}\int_{-b}^{b}\frac{f(y)}{x-y}\,dy,$$ 
and 
$$\Pi_{0}=\frac{1}{\pi}\int_{-b}^{b}\sqrt{\frac{b+y}{b-y}}f(y)\, dy.$$ 
Next, we estimate the norm of $\mathcal{T} f$ in $L^{r}[-b,b]$ 
given $f \in L^{p}[-b,b]$.
 Using Minkowski's inequality, we have
\begin{equation}\label{eq:minkowski inequality}
||\mathcal{T}(f)||_{L^{r}[-b,b]}\leq \left|\left|\frac{1}{\sqrt{b^2-x^2}}\Pi_{1}\right|\right|_{L^{r}[-b,b]}
+\left|\left|\Pi_{2}\right|\right|_{L^{r}[-b,b]}
+|\Pi_{0}|\left|\left|\frac{1}{\sqrt{b^2-x^2}}\right|\right|_{L^{r}[-b,b]}.
\end{equation}

In the following, the definitions of $r$ and $p$ are unchanged but the terms $p'$ and $q'$ are redefined for each subsection.  Next, we consider  each term in \eqref{eq:minkowski inequality} separately

\subsubsection{First term $\left|\left|\frac{1}{\sqrt{b^2-x^2}}\Pi_{1}\right|\right|_{L^{r}[-b,b]}$ }
The first term $\left|\left|\frac{1}{\sqrt{b^2-x^2}}\Pi_{1}\right|\right|_{L^{r}[-b,b]}$ can be estimated by H\"older's inequality so that
\begin{equation}\label{}
  \left|\left|\frac{1}{\sqrt{b^2-x^2}}\Pi_{1}\right|\right|_{L^{r}[-b,b]}^r\leq\left|\left|\frac{1}{\sqrt{b^2-x^2}}\right|\right|_{L^{q'}[-b,b]}^{\frac{q'}{r}}
  		\left|\left|\Pi_{1}\right|\right|_{L^{p'}[-b,b]}^{\frac{p'}{r}}
\end{equation}
where $\frac{1}{r}=\frac{1}{q'}+\frac{1}{p'}$. Note that $\left|\left|\frac{1}{\sqrt{b^2-x^2}}\right|\right|_{L^{q'}[-b,b]}$ is finite if $q'< 2$. Moreover, the term  $\left| \left|\Pi_{1}\right|\right|_{L^{p'}[-b,b]}$ can be estimated using H\"older's inequality 
 \begin{equation}
 \begin{split}
 \int_{-b}^{b} |\Pi_{1}(x)|^{p'} \, dx & \leq 
 			 \left( \int_{-b}^{b}\int_{-b}^{b}  \left(\frac{|x+y|}{\pi (\sqrt{b^2-x^2} +\sqrt{b^2-y^2})}\right)^{p'} \, dy \, dx \right) 
				\left(\int_{-b}^{b}|f(y)|^{p} \, dy\right)^{\frac{p'}{p}}\\
 &\leq \left(\int_{-b}^{b} \int_{-b}^{b} \left(\frac{|x+y|}{\pi(\sqrt{b^2-x^2}+\sqrt{b^2-y^2})} \right)^{p'} \,  dy \,  dx \right) \Vert f \Vert_{L^{p}[-b,b]}^{p'} ,
 \end{split}
 \end{equation}
where $\frac{1}{p'}+\frac{1}{p}=1$. The double integral can be shown to be finite for $p'< 4$, \cite{Tricomi}.

To illustrate that, the double integral is estimated as the following:
\begin{equation}\label{eq:1st bound of tricomi }
\int_{-b}^{b}\int_{-b}^{b}\left(\frac{|x+y|}{\sqrt{b^2-x^2}+\sqrt{b^2-y^2}} \right )^{p'}dx\, dy \leq \int_{-b}^{b}\int_{-b}^{b}\left(\frac{2b}{\sqrt{b^2-x^2-y^2}} \right )^{p'} \, dx\,dy.
\end{equation}
Using polar coordinates, the term $\int_{-b}^{b}\int_{-b}^{b}\left(\frac{2b}{\sqrt{b^2-x^2-y^2}} \right )^{p'}\,dx\, dy$ can be written as
\begin{equation}\label{}
 \int_{-b}^{b}\int_{-b}^{b}\left(\frac{2b}{\sqrt{b^2-x^2-y^2}} \right )^{p'}\, dx \, dy= \frac{8(2b)^{p'}}{p'-2}\left(b^{2-p'}\int_{0}^{\frac{\pi}{4}}\left(2-\sec^2\theta \right )^{1-\frac{p'}{2}}\, d\theta-\frac{\pi}{2^{1+\frac{p'}{2}}}b^{2+p'} \right).
\end{equation}
Using the substitution $u=1-\tan^2\theta$, the integration term $\int_{0}^{\frac{\pi}{4}}\left(2-\sec^2\theta \right )^{1-\frac{p'}{2}}\, d\theta$ can be estimated as
\begin{equation}\label{eq:estimate for the integral 1-sec}
\begin{split}
\int_{0}^{\frac{\pi}{4}}\left(2-\sec^2\theta \right )^{1-\frac{p'}{2}}d\theta &=\int_{0}^{\frac{\pi}{4}}\frac{1}{(1-\tan^2\theta)^{\frac{p'}{2}-1}}\,d\theta\\
&=\frac{1}{2}\int_{0}^{1}\frac{1}{u^{\frac{p'}{2}-1}\sqrt{1-u}(2-u)} \, du\\
&=\frac{1}{2}\int_{0}^{\frac{1}{2}}\frac{1}{u^{\frac{p'}{2}-1}\sqrt{1-u}(2-u)} \, du+\frac{1}{2}\int_{\frac{1}{2}}^{1}\frac{1}{u^{\frac{p'}{2}-1}\sqrt{1-u}(2-u)}\, du\\
&\leq\frac{\sqrt{3}}{2}\int_{0}^{\frac{1}{2}}\frac{1}{u^{\frac{p'}{2}-1}}du+2^{\frac{p'}{2}-2}\int_{\frac{1}{2}}^{1}\frac{1}{\sqrt{1-u}}\, du.\\
\end{split}
\end{equation}
From the inequality, we have finiteness when $\frac{p'}{2}-1<1$ and hence $p'<4$. 
Since  $q'<2$, we have $r<\frac{4}{3}$ and $p>\frac{4}{3}$.

\subsubsection{The second term $\left|\left|\Pi_{2}\right|\right|_{L^{r}[-b,b]}$}
The term $\Pi_{2}$ corresponds to the finite Hilbert operator applied on $f$. Using Riesz theorem \cite{Riesz}, the finite Hilbert operator is bounded on $L^s[-b,b]$ for any  $s>1$. Consequently, we have $\left|\left|\Pi_{2}\right|\right|_{L^{r}[-b,b]} \leq C_{p,r} \Vert f \Vert_{L^{p}[-b,b]} $ for  $ 1 < r \leq p$.

\subsubsection{The third term  $\Pi_{0}\left|\left|\frac{1}{\sqrt{b^2-x^2}}\right|\right|_{L^{r}[-b,b]}$}
 Finally, for the last term $\Pi_{0}\left|\left|\frac{1}{\sqrt{b^2-x^2}}\right|\right|_{L^{r}[-b,b]}$, we have that $\Pi_{0}$ is a finite constant since   
\begin{equation}\label{}
\begin{split}
|\Pi_{0}|&\leq \frac{1}{\pi}\int_{-b}^{b}\left|\sqrt{\frac{b+y}{b-y}}f(y)\right|dy\\
&\leq\frac{1}{\pi} \left(\int_{-b}^{b}\left(\frac{b+y}{b-y}\right)^\frac{p'}{2}dy\right)^{\frac{1}{p'}} \Vert f \Vert_{L^{p}[-b,b]}^{p},\\
\end{split}
\end{equation}
where $\frac{1}{p'}+\frac{1}{p}=1$. Note that the first term on the right side of the inequality is finite when $p'<2$ which corresponds to the case of $p>2$. Moreover, $\left|\left|\frac{1}{\sqrt{b^2-x^2}}\right|\right|_{L^{r}[-b,b]}$ is finite when $r<2$ without imposing any conditions on $p$. Based on studying the boundedness of the three terms in \eqref{eq:minkowski inequality}, $\mathcal{T}$ is bounded when $p>2$ and $r<\frac{4}{3}$ and that completes the proof.
\end{proof}
We next characterize the solution of the equation $\mathcal{H}_{b}(g)= f$  using the Tricomi operator $\mathcal{T}$ defined in \eqref{eq:tricomi operator}.

\begin{lemma}\label{T}
Given $f \in L^{p}[-b,b]$ with $p >2$, there exists a  solution to the equation 
$$\mathcal{H}_{b}(g)=f$$
 given by $g =\mathcal{T}(f) \in L^{4/3-}[-b,b]$ . Moreover, the solution is unique in the class of functions satisfying the Kutta condition  $\lim_{x \to b^{-}} g(x)=0$.
\begin{proof}
Starting with the inversion formula from Lemma \ref{inversion}, we can express any solution $g(x)$ as
\begin{align*}
g(x)& =\frac{1}{\pi}\int_{-b}^{b}\sqrt{\frac{b^2-y^2}{b^2-x^2}}\frac{f(y)}{y-x} \,dy
	+ \frac{C}{\sqrt{b^{2}-x^{2}}} \\
 & =  \mathcal{T}f - \frac{1}{\pi}\int_{-b}^{b}\sqrt{\frac{b^2-y^2}{b^2-x^2}}\frac{f(y)}{b-y}\, dy + \frac{C}{\sqrt{b^{2}-x^{2}}}\\
& = \mathcal{T}f  
-\frac{1}{\pi}\frac{1}{\sqrt{b^2-x^2}}\int_{-b}^{b}\sqrt{\frac{b+y}{b-y}}f(y) \, dy + \frac{C}{\sqrt{b^{2}-x^{2}}}
\\
& = \mathcal{T}f  
-\frac{C_{1}}{\sqrt{b^2-x^2}} + \frac{C}{\sqrt{b^{2}-x^{2}}},
\end{align*}
where $C_{1} = \frac{1}{\pi}\int_{-b}^{b}\sqrt{\frac{b+y}{b-y}}f(y) \, dy$
is finite since $f \in L^{p}[-b,b]$ for $p>2$, which implies all possible solutions $g(x)$ can be expressed as
\begin{align*}
g(x) = \mathcal{T}(f)(x)  
+\frac{C_{0}}{\sqrt{b^2-x^2}},
\end{align*}
with $C_{0}$ arbitrary. Now, since we seek a solution $g(x)$ satisfying the Kutta condition, this implies $C_{0}=0$, and the proof is complete.
\end{proof}
\end{lemma}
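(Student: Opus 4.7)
The plan is to bootstrap from Lemma~\ref{inversion}, which already gives the general form of any solution of $\mathcal{H}_{b}(g)=f$ as a singular integral against the weight $\sqrt{(b^2-y^2)/(b^2-x^2)}$ plus a free $C/\sqrt{b^2-x^2}$ term. The goal is then purely algebraic: to rewrite the weighted singular integral in terms of the Tricomi operator $\mathcal{T}$ defined in \eqref{eq:tricomi operator}, absorbing any leftover pieces into the free parameter $C$. Existence follows because $\mathcal{T}(f)$ already lies in $L^{4/3-}[-b,b]$ by Lemma~\ref{Tricomi bounded}, and uniqueness in the Kutta class is then immediate from the fact that the two candidate solutions differ only by a multiple of $1/\sqrt{b^2-x^2}$, which blows up at $x=b$.

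Concretely, I would start from
\[
g(x) = \frac{1}{\pi}\int_{-b}^{b}\sqrt{\frac{b^2-y^2}{b^2-x^2}}\frac{f(y)}{y-x}\,dy + \frac{C}{\sqrt{b^2-x^2}},
\]
and compare the kernel $\sqrt{(b^2-y^2)/(b^2-x^2)}$ to the Tricomi kernel $\sqrt{(b-x)/(b+x)}\sqrt{(b+y)/(b-y)}$. The identity $\frac{b-x}{b-y}=1+\frac{y-x}{b-y}$ (used implicitly in the first manipulation inside Lemma~\ref{Tricomi bounded}) shows that the two kernels differ by a factor whose contribution, after multiplying by $1/(y-x)$, splits into exactly the Tricomi kernel $\frac{1}{(y-x)}\sqrt{(b-x)(b+y)/((b+x)(b-y))}$ minus a term $\sqrt{(b+y)/(b-y)}/\sqrt{b^2-x^2}$ that no longer depends on $x$ inside the integral except via the overall $1/\sqrt{b^2-x^2}$ prefactor. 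In other words,
\[
\frac{1}{\pi}\int_{-b}^{b}\sqrt{\frac{b^2-y^2}{b^2-x^2}}\frac{f(y)}{y-x}\,dy = \mathcal{T}(f)(x) - \frac{C_{1}}{\sqrt{b^2-x^2}}, \qquad C_{1}=\frac{1}{\pi}\int_{-b}^{b}\sqrt{\frac{b+y}{b-y}}f(y)\,dy,
\]
so the general solution collapses to $g(x)=\mathcal{T}(f)(x)+C_{0}/\sqrt{b^2-x^2}$ with $C_{0}=C-C_{1}$ arbitrary.

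At this stage I would check that $C_{1}$ is actually a finite number: writing $C_{1}$ as the pairing of $\sqrt{(b+y)/(b-y)}$ against $f$ and applying H\"older with conjugate exponents $p$ and $p'=p/(p-1)<2$, the weight is in $L^{p'}[-b,b]$ precisely because $p>2$. This is the only place the hypothesis $p>2$ is genuinely needed at this stage (the membership $\mathcal{T}(f)\in L^{4/3-}$ was already established in Lemma~\ref{Tricomi bounded} under the same hypothesis). Finally, since $\mathcal{T}(f)(x)$ carries the explicit factor $\sqrt{(b-x)/(b+x)}$ and therefore tends to $0$ as $x\to b^{-}$ (for $f$ nice enough to make the limit meaningful), while $1/\sqrt{b^2-x^2}\to\infty$ at the same endpoint, the Kutta condition $\lim_{x\to b^-}g(x)=0$ forces $C_{0}=0$, giving uniqueness.

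The main obstacle I anticipate is the algebraic identification of the weighted singular integral with $\mathcal{T}(f)$ plus a constant multiple of $1/\sqrt{b^2-x^2}$; once that kernel decomposition is spotted (the trick being the splitting $(b-x)/(b-y)=1+(y-x)/(b-y)$ which cancels the singular denominator against one of the pieces), the rest is bookkeeping. The uniqueness step, while conceptually simple, also hides a minor subtlety in that $\lim_{x\to b^{-}}\mathcal{T}(f)(x)=0$ must be argued at least in a distributional/pointwise a.e.\ sense along the chord, but this follows from the $\sqrt{(b-x)/(b+x)}$ prefactor dominating the integral on the right of \eqref{eq:tricomi operator}.
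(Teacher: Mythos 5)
Your proposal is correct and follows essentially the same route as the paper's proof: starting from the S\"ohngen--Tricomi inversion formula of Lemma \ref{inversion}, splitting the kernel via $\frac{b-x}{b-y}=1+\frac{y-x}{b-y}$ to identify the weighted singular integral with $\mathcal{T}(f)$ minus $C_{1}/\sqrt{b^{2}-x^{2}}$, using H\"older with $p>2$ for the finiteness of $C_{1}$, and invoking the Kutta condition to eliminate the free constant. Your added remark that the vanishing of $\mathcal{T}(f)$ at $x=b^{-}$ deserves justification is a point the paper leaves implicit, but otherwise the two arguments coincide.
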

{\bf Remark}: A solution to the equation and its uniqueness in a class of functions satisfying the Kutta condition can also be obtained using the fact that $H_{b}$ is a bijective isometry from the weighted Hilbert space $L^{2}(\sigma)$ on itself where $\sigma(x) = \frac{\sqrt{b+x}}{\sqrt{b-x}}$, \cite{Schleiff, Okada, OkadaPr}.

\subsection{Existence and Uniqueness of Solution of the Possio equation for the case k=0}

In this section, the existence and uniqueness of solution to equation \eqref{Final Possio} is established through the framework of contraction mappings theory. 
 Applying the Tricomi operator to each side of equation \eqref{Final Possio} yields
\begin{equation}\label{eq:final form of Possio equation to be solved}
 \frac{2}{\sqrt{1-M^2}}\mathcal{T}(w_{a})=A-\mathcal{T}\mathcal{P}\mathcal{L}\mathcal{H}\mathcal{P}A=(\mathcal{I}-\mathcal{T}\mathcal{P}\mathcal{L}\mathcal{H}\mathcal{P})A.
\end{equation}
The equivalence of equation \eqref{Final Possio} and \eqref{eq:final form of Possio equation to be solved}
follows from Lemma \ref{T}, under the assumption of $w_{a} \in L^{2+}[-b,b]$ and $A \in L^{p}[-b,b]$  with $A$ satisfying the Kutta condition.  
Note that the Tricomi operator defined in \eqref{eq:tricomi operator} maps functions from $L^{2+}[-b,b]$ to $L^{\frac{4}{3}-}[-b,b]$ while the operator $\mathcal{P}\mathcal{L}\mathcal{H}\mathcal{P}$ maps functions from $L^{\frac{4}{3}- }[-b,b]$ to $C[-b,b]$ boundedly. Hence, we establish existence of a solution $A$ in $L^{\frac{4}{3}- }[-b,b]$ to \eqref{eq:final form of Possio equation to be solved}. If there exists values of $c$ such that $||\mathcal{T}\mathcal{P}\mathcal{L}\mathcal{H}\mathcal{P}||_{\mathcal{L}(L^{\frac{4}{3}-}[-b,b])}<1$, then the solution to equation \eqref{eq:final form of Possio equation to be solved} is given uniquely by
\begin{equation}\label{eq:solution to Possio equation k=0}
A=\frac{2}{\sqrt{1-M^2}}\sum_{n=0}^{\infty}(\mathcal{T}\mathcal{P}\mathcal{L}\mathcal{H}\mathcal{P})^n\mathcal{T}(w_a).
\end{equation}
The expression in \eqref{eq:solution to Possio equation k=0} can then be rearranged to give
\begin{equation}\label{eq:solution to possio equation k=0 seperated}
A=\frac{2}{\sqrt{1-M^2}}\mathcal{T}(w_a)+\frac{2}{\sqrt{1-M^2}}\sum_{n=1}^{\infty}(\mathcal{T}\mathcal{P}\mathcal{L}\mathcal{H}\mathcal{P})^n\mathcal{T}(w_a).
\end{equation}
Equation~ \eqref{eq:solution to possio equation k=0 seperated} shows that the pressure jump is a linear combination of two terms. The first term is equivalent to the pressure jump along the airfoil in open flow (represented by the first term on the right hand side of equation~\eqref{eq:solution to possio equation k=0 seperated}). Moreover, the second term couples the effect of the downwash with the effect of the elevation from the ground. Next, we require a preliminary lemma in order to establish the existence and uniqueness of the solution to the derived Possio equation \eqref{eq:final form of Possio equation to be solved}.

\begin{lemma}\label{lemma:PLHP bounded}
The  operator $\mathcal{P}\mathcal{L}\mathcal{H}\mathcal{P}$ given by equation \eqref{eq:PLHP} is bounded from $L^{p}[-b,b]$ to $C[-b,b]$ for any $p \geq 1$ and hence into any $L^{p}[-b,b]$ with a bound inversely proportional to $c$.
\end{lemma}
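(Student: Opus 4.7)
The plan is to exploit the fact that the kernel of $\mathcal{P}\mathcal{L}\mathcal{H}\mathcal{P}$ is a continuous, uniformly bounded function. Specifically, the function $g(t)=t/(1+t^{2})$ satisfies $|g(t)|\leq 1/2$ for all $t\in\mathbb{R}$ (the maximum is attained at $t=\pm 1$), and it is uniformly continuous on $\mathbb{R}$ since it is continuous and decays to $0$ at infinity. Once these two properties are in hand, the rest is routine kernel estimation.

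First I would establish the pointwise/sup-norm bound. From the representation
\begin{equation*}
\mathcal{P}\mathcal{L}\mathcal{H}\mathcal{P}(f)(x)=\frac{-1}{\pi c}\int_{-b}^{b}f(\tau)\,g\!\left(\frac{\tau-x}{c}\right)d\tau,
\end{equation*}
the estimate $|g|\leq 1/2$ gives, for every $x\in[-b,b]$,
\begin{equation*}
\left|\mathcal{P}\mathcal{L}\mathcal{H}\mathcal{P}(f)(x)\right|
\leq \frac{1}{2\pi c}\,\|f\|_{L^{1}[-b,b]}
\leq \frac{(2b)^{1/p'}}{2\pi c}\,\|f\|_{L^{p}[-b,b]},
\end{equation*}
where $1/p+1/p'=1$ and H\"older's inequality is used in the last step (with the convention $p'=\infty$ when $p=1$). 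This already shows the $L^{p}\to L^{\infty}$ bound with a constant proportional to $1/c$.

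Next I would show the image lies in $C[-b,b]$. Given $x,x'\in[-b,b]$ and any $\varepsilon>0$, uniform continuity of $g$ provides $\delta>0$ such that $|g(s)-g(s')|<\varepsilon$ whenever $|s-s'|<\delta$. Choosing $|x-x'|<c\delta$ makes $|(\tau-x)/c-(\tau-x')/c|<\delta$ uniformly in $\tau\in[-b,b]$, hence
\begin{equation*}
\left|\mathcal{P}\mathcal{L}\mathcal{H}\mathcal{P}(f)(x)-\mathcal{P}\mathcal{L}\mathcal{H}\mathcal{P}(f)(x')\right|
\leq \frac{\varepsilon}{\pi c}\,\|f\|_{L^{1}[-b,b]},
\end{equation*}
which proves continuity on $[-b,b]$. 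Combined with the sup-norm estimate, this gives boundedness from $L^{p}[-b,b]$ into $C[-b,b]$, with operator norm bounded by a constant (depending only on $p$ and $b$) divided by $c$.

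Finally, the embedding $C[-b,b]\hookrightarrow L^{q}[-b,b]$ is bounded for every $q\geq 1$ with constant $(2b)^{1/q}$, so the composition yields boundedness from $L^{p}[-b,b]$ into any $L^{q}[-b,b]$ with the same $1/c$ dependence. There is no real obstacle here; the only mild point to verify carefully is the uniform bound $|g|\leq 1/2$, which follows from a one-line calculus argument on $g(t)=t/(1+t^{2})$, and the uniform continuity of $g$, which is immediate since $g\in C_{0}(\mathbb{R})$.
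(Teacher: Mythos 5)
Your proof is correct and follows essentially the same route as the paper: both arguments reduce to bounding the kernel $\tau\mapsto\frac{1}{\pi c}\,g\!\left(\frac{\tau-x}{c}\right)$ by its supremum of order $1/c$ (the paper phrases this as estimating $\Vert g_{c}\Vert_{L^{q}[-2b,2b]}\leq\frac{(4b)^{1/q}}{2c}$ for $g_{c}(t)=\frac{t}{t^{2}+c^{2}}$) and then applying H\"older's inequality. The only difference is that you additionally verify, via uniform continuity of $g$, that the image actually lies in $C[-b,b]$ --- a point the paper's proof leaves implicit.
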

\begin{proof}

Estimating  $\mathcal{P}\mathcal{L}\mathcal{H}\mathcal{P}( f)$ using H\"older's inequality results in the inequality 
\begin{equation}
|\mathcal{P}\mathcal{L}\mathcal{H}\mathcal{P}(f)|
\leq  ||g_{c}||_{L^{q}[-2b,2b]}  \Vert f \Vert_{L^{p}[-2b,2b]} 
\end{equation}
where $\frac{1}{p}+\frac{1}{q}=1$ and $g_c (t)=\frac{t}{t^2+c^2}$ with a norm  estimated by
\begin{equation}\label{eq:bound of PLHP}
 ||g_{c}||_{L^{q}[-2b,2b]}=\left(\int_{-2b}^{2b}\left|\frac{t}{t^2+c^2}\right|^{q}dt \right)^{\frac{1}{q}}\leq\left|\frac{c}{c^2+c^2}\right|(4b)^{\frac{1}{q}}=\frac{(4b)^{\frac{1}{q}}}{2c},
\end{equation}
and that completes the proof.
\end{proof}

Finally, the existence-uniqueness theorem of equation~ \eqref{Final Possio} is introduced and proved.

\begin{theorem}
Given $w_{a}  \in L^{p}[-b,b]$ with $p>2$, there exists a value $c_{0}>0$ such that for all $c\geq c_{0}$ the Possio equation \eqref{Final Possio} has a unique solution $A \in L^{\frac{4}{3}-}[-b,b]$  satisfying the Kutta condition.
\end{theorem}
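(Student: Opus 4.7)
The plan is to recast the Possio equation \eqref{Final Possio} as the fixed-point problem \eqref{eq:final form of Possio equation to be solved}, which has already been derived before the statement of the theorem, and then apply the Banach contraction principle (equivalently, a Neumann series) on the single space $L^{r}[-b,b]$ with any fixed $r<4/3$, once $\mathcal{T}\mathcal{P}\mathcal{L}\mathcal{H}\mathcal{P}$ has been shown to be contractive there for sufficiently large $c$.

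First I will record the equivalence between \eqref{Final Possio} and \eqref{eq:final form of Possio equation to be solved}: applying $\mathcal{T}$ to \eqref{Final Possio} and invoking the identity $\mathcal{T}\mathcal{H}_{b}A = A$ from Lemma \ref{T} (valid precisely in the class of functions satisfying the Kutta condition) yields \eqref{eq:final form of Possio equation to be solved}; conversely, any $A$ solving \eqref{eq:final form of Possio equation to be solved} is automatically in the range of $\mathcal{T}$, hence satisfies the Kutta condition by Lemma \ref{T}, and applying $\mathcal{H}_{b}$ recovers \eqref{Final Possio}. Since $w_{a}\in L^{p}[-b,b]$ with $p>2$, Lemma \ref{Tricomi bounded} places $\mathcal{T}(w_{a})$ in $L^{r}[-b,b]$ for every $r<4/3$, so the right-hand side of \eqref{eq:final form of Possio equation to be solved} sits in the intended target space.

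Next I will estimate the operator norm of $\mathcal{T}\mathcal{P}\mathcal{L}\mathcal{H}\mathcal{P}\colon L^{r}[-b,b]\to L^{r}[-b,b]$. For $A\in L^{r}[-b,b]$ with $r<4/3$, Lemma \ref{lemma:PLHP bounded} produces $\mathcal{P}\mathcal{L}\mathcal{H}\mathcal{P}A\in C[-b,b]$ with $\|\mathcal{P}\mathcal{L}\mathcal{H}\mathcal{P}A\|_{C[-b,b]}\leq (K_{1}/c)\|A\|_{L^{r}[-b,b]}$, where $K_{1}=K_{1}(r,b)$ is independent of $c$. Because $C[-b,b]\subset L^{s}[-b,b]$ for every $s>2$, Lemma \ref{Tricomi bounded} gives $\|\mathcal{T}g\|_{L^{r}[-b,b]}\leq K_{2}\|g\|_{C[-b,b]}$ for $g\in C[-b,b]$, with $K_{2}$ independent of $c$. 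Composing the two estimates yields
\begin{equation*}
\|\mathcal{T}\mathcal{P}\mathcal{L}\mathcal{H}\mathcal{P}\|_{\mathcal{L}(L^{r}[-b,b])}\leq \frac{K_{1}K_{2}}{c}.
\end{equation*}

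Finally I will set $c_{0}:=K_{1}K_{2}+1$: for every $c\geq c_{0}$ the operator $\mathcal{T}\mathcal{P}\mathcal{L}\mathcal{H}\mathcal{P}$ is a strict contraction on the Banach space $L^{r}[-b,b]$, so $\mathcal{I}-\mathcal{T}\mathcal{P}\mathcal{L}\mathcal{H}\mathcal{P}$ is invertible and \eqref{eq:final form of Possio equation to be solved} has a unique solution $A\in L^{r}[-b,b]$ represented by the Neumann series \eqref{eq:solution to Possio equation k=0}. A short bootstrap via the identity $A = \mathcal{T}\bigl(\tfrac{2}{\sqrt{1-M^{2}}}w_{a}+\mathcal{P}\mathcal{L}\mathcal{H}\mathcal{P}A\bigr)$, whose argument lies in $L^{p}[-b,b]$ for some $p>2$, upgrades $A$ to every $L^{r'}[-b,b]$ with $r'<4/3$ and confirms the Kutta condition by Lemma \ref{T}. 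I expect the main technical step to be securing the quantitative norm bound on the composition $\mathcal{T}\mathcal{P}\mathcal{L}\mathcal{H}\mathcal{P}$; once that is in hand, the contraction mapping conclusion and the Kutta verification follow routinely from the preceding lemmas.
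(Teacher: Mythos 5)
Your proposal is correct and follows essentially the same route as the paper: recast \eqref{Final Possio} as the fixed-point problem \eqref{eq:final form of Possio equation to be solved} via Lemma \ref{T}, bound $\|\mathcal{T}\mathcal{P}\mathcal{L}\mathcal{H}\mathcal{P}\|_{\mathcal{L}(L^{r})}$ by a constant times $1/c$ using Lemmas \ref{Tricomi bounded} and \ref{lemma:PLHP bounded}, and conclude by the contraction principle for $c$ large. The only differences are cosmetic (routing explicitly through $C[-b,b]$ and a brief bootstrap remark), so no further comment is needed.
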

\begin{proof}
Consider the equivalent equation \eqref{eq:final form of Possio equation to be solved}.
Given $w_{a} \in L^{p}[-b,b]$, we have $\mathcal{T}w_{a} \in L^{r}[-b,b]$ for every $r<4/3$.  
It is sufficient to show that the operator
$\mathcal{T}\mathcal{P}\mathcal{L}\mathcal{H}\mathcal{P}$ is a contraction on the space $L^{r}[-b,b]$.
Now, 
$\mathcal{T}\mathcal{P}\mathcal{L}\mathcal{H}\mathcal{P}
$ is bounded on $L^{r}[-b,b]$ since it is a composition of $\mathcal{P}\mathcal{L}\mathcal{H}\mathcal{P}:L^{\frac{4}{3}-}[-b,b]\rightarrow L^{2+}[-b,b]$ and $\mathcal{T}:L^{2+}[-b,b]\rightarrow L^{\frac{4}{3}-}[-b,b]$ as established in Lemmas \ref{Tricomi bounded} and \ref{lemma:PLHP bounded} respectively. The norm $||\mathcal{T}\mathcal{P}\mathcal{L}\mathcal{H}\mathcal{P}||_{\mathcal{L}(L^{r}[-b,b])}$ is estimated  
using the bounds on $||\mathcal{T}||_{\mathcal{L}(L^{p},L^{r})}$ and 
$||\mathcal{P}\mathcal{L}\mathcal{H}\mathcal{P}||_{\mathcal{L}(L^{r},L^{p})}$, from Lemmas \ref{Tricomi bounded} and \ref{lemma:PLHP bounded} respectively, to obtain
\begin{align*}\label{eq:inequality for the bound of TPLHP}
||\mathcal{T}\mathcal{P}\mathcal{L}\mathcal{H}\mathcal{P}||_{\mathcal{L}(L^{r}[-b,b])} &\leq ||\mathcal{T}||_{\mathcal{L}(L^{p},L^{r})} ||\mathcal{P}\mathcal{L}\mathcal{H}\mathcal{P}||_{\mathcal{L}(L^{r},L^{p})}\\
 & \leq \frac{K 4b^{q}}{2c},
\end{align*}
 where $\frac{1}{p}+\frac{1}{q}= 1$, and $K=||\mathcal{T}||_{\mathcal{L}(L^{p},L^{r})}$.
Choosing $c >c_{0} = \frac{K 4b^{q}}{2}$, we have that 
$||\mathcal{T}\mathcal{P}\mathcal{L}\mathcal{H}\mathcal{P}||_{\mathcal{L}(L^{r}[-b,b])}$ is a contraction, and hence 
\eqref{eq:final form of Possio equation to be solved} has a unique solution $A \in L^{r}[-b,b]$ for any $r<4/3$ or $A \in L^{4/3-}[-b,b]$. Moreover, $A$ clearly satisfies the Kutta condition. 
From Lemma \ref{T}, $A$ must satisfy \eqref{Final Possio}, and on the other hand, a solution to \eqref{Final Possio} which satisfies the Kutta condition must necessarily satisfy \eqref{eq:final form of Possio equation to be solved}. This establishes that 
$A$ is the unique solution to  \eqref{Final Possio} in the class of $L^{\frac{4}{3}-}[-b,b]$ functions satisfying the Kutta condition.
\end{proof}
\subsection{Approximate Solution}
Next, an approximate solution to \eqref{eq:final form of Possio equation to be solved} is obtained. From equation \eqref{eq:PLHP} and \eqref{eq:tricomi operator} and changing integration order, the operator  $\mathcal{T}\mathcal{P}\mathcal{H}\mathcal{P}(A)$ can be written as:
\begin{equation}\label{eq:TPLHP after changing integration order}
\mathcal{T}\mathcal{P}\mathcal{L}\mathcal{H}\mathcal{P}(A)(x)=\frac{-1}{\pi c}\int_{-b}^{b}A(\tau)\left(\sqrt{\frac{b-x}{b+x}}\int_{-b}^{b}\sqrt{\frac{b+t}{b-t}}\frac{1}{t-x} g\left(\frac{\tau-t}{c}\right)dt\right)d\tau.
\end{equation}
The expression in brackets in \eqref{eq:TPLHP after changing integration order} corresponds to applying the Tricomi operator on $g\left(\frac{\tau-t}{c}\right)$. The function $g\left(\frac{\tau-t}{c}\right)$ can be linearly approximated by $\frac{\tau-t}{c}$ and that reduces equation~\eqref{eq:TPLHP after changing integration order} to
\begin{equation}\label{eq:TPLHP after linear approxiamtion}
  \mathcal{T}\mathcal{P}\mathcal{L}\mathcal{H}\mathcal{P}(A)(x)=\frac{-1}{\pi c^2}\int_{-b}^{b}A(\tau)\left(\tau\mathcal{T}(1)(x)-\mathcal{T}(x)(x) \right)\, d\tau,
\end{equation}
but $\mathcal{T}(1)(x)=\sqrt{\frac{b-x}{b+x}}$ and $\mathcal{T}(x)(x)=\sqrt{b^2-x^2}$. Therefore, equation~ \eqref{eq:final form of Possio equation to be solved} is simplified to be:
\begin{equation}\label{eq:approximate equation in terms R0 and R1}
  \frac{2}{\sqrt{1-M^2}}\mathcal{T}(w_{a})=A-\frac{1}{\pi c^2}\left(\sqrt{b^2-x^2}\mathcal{R}_{0}(A)-\sqrt{\frac{b-x}{b+x}}\mathcal{R}_{1}(A)\right),
\end{equation}
where $\mathcal{R}_{0}(A)=\int_{-b}^{b}A(\tau)d\tau$ and $\mathcal{R}_{1}(A)=\int_{-b}^{b}\tau A(\tau) \,d\tau$. Applying $\mathcal{R}_{0}$ and $\mathcal{R}_{1}$ on equation~\eqref{eq:approximate equation in terms R0 and R1} respectively results in the set of equations
\begin{eqnarray}\label{eq:equations for R0}
  \frac{2}{\sqrt{1-M^2}}\mathcal{R}_0((\mathcal{T}(w_a)))=\mathcal{R}_0(A)-\frac{1}{\pi c^2}\left(\mathcal{R}_0\left(\sqrt{b^2-x^2}\right)\mathcal{R}_0(A)-\mathcal{R}_0\left(\sqrt{\frac{b-x}{b+x}} \right )\mathcal{R}_1(A) \right ), \\
 \label{eq:equations for R1}
 \frac{2}{\sqrt{1-M^2}}\mathcal{R}_1((\mathcal{T}(w_a)))=\mathcal{R}_1(A)-\frac{1}{\pi c^2}\left(\mathcal{R}_1\left(\sqrt{b^2-x^2}\right)\mathcal{R}_0(A)-\mathcal{R}_1\left(\sqrt{\frac{b-x}{b+x}} \right )\mathcal{R}_1(A) \right),
\end{eqnarray}
where $\mathcal{R}_0\left(\sqrt{\frac{b-x}{b+x}}\right)=b\pi$, $\mathcal{R}_0(\sqrt{b^2-x^2})=\frac{\pi b^2}{2}$, $\mathcal{R}_1\left(\sqrt{\frac{b-x}{b+x}}\right)=-\frac{\pi b^2}{2}$, and $\mathcal{R}_1(\sqrt{b^2-x^2})=0$. 
Solving equations \eqref{eq:equations for R0} and \eqref{eq:equations for R1} for $\mathcal{R}_0(A)$ and $\mathcal{R}_1(A)$ results in
\begin{align}\label{eq:R0 and R1 expressions}
\mathcal{R}_0(A)&=\frac{2}{\sqrt{1-M^2}\left(1-\frac{b^2}{2c^2} \right )^2}\left(\left(1-\frac{b^2}{2c^2} \right )\mathcal{R}_0\left(\mathcal{T}(w_a)\right)-\frac{b}{c^2}\mathcal{R}_1\left(\mathcal{T}(w_a)\right) \right ), \\
 \mathcal{R}_1(A)&=\frac{2\mathcal{R}_1\left(\mathcal{T}(w_a)\right)}{\sqrt{1-M^2}\left(1-\frac{b^2}{2c^2} \right )}.\nonumber
\end{align}
Finally, substituting \eqref{eq:R0 and R1 expressions} in \eqref{eq:approximate equation in terms R0 and R1} and solving for $A$ yields
\begin{equation}\label{eq:A}
  A=\frac{2}{\sqrt{1-M^2}}\left[\mathcal{T}(w_a)+\frac{1}{\pi c^2\left(1-\frac{b^2}{2c^2}\right)^2} \begin{pmatrix}
\sqrt{b^2-x^2}\\
-\sqrt{\frac{b-x}{b+x}}
\end{pmatrix}^T\begin{pmatrix}
1-\frac{b^2}{2c^2} &-\frac{b}{c^2} \\
 0&1-\frac{b^2}{2c^2}
\end{pmatrix}\begin{pmatrix}
\mathcal{R}_0\left(\mathcal{T}(w_a) \right )\\
\mathcal{R}_1\left(\mathcal{T}(w_a) \right )
\end{pmatrix} \right ].
\end{equation}

\subsubsection{Lift and Moment Calculations}
For the steady state case, the airfoil downwash $w_a$ is a function of the angle of attack $\theta$. In particular, 
\begin{equation}\label{eq:downwash steady state case}
  w_a=-U\theta.
\end{equation}
Note that the downwash given in \eqref{eq:downwash steady state case} is not a function of $x$ and therefore, only $\mathcal{R}_0(\mathcal{T}(1))$ and $\mathcal{R}_1(\mathcal{T}(1))$ are required to compute the pressure jump for the steady state case with a downwash given by \eqref{eq:downwash steady state case}.
The pressure jump is calculated to be
\begin{equation}\label{eq:A seady state}
  A=-\frac{2U\theta}{\sqrt{1-M^2}}\left[\left(1+\frac{b^2}{2\pi c^2\left(1-\frac{b^2}{2c^2}\right)}\right)\sqrt{\frac{b-x}{b+x}}+\frac{b}{\pi c\left(1-\frac{b^2}{2c^2} \right )}\sqrt{b^2-x^2} \right ].
\end{equation}
Now, we calculate the aerodynamic lift $\textbf{F}$ on the airfoil using the formula
\begin{equation}\label{}
 \textbf{F}=-\rho U\int_{-b}^{b}A(x)dx=-\rho U\mathcal{R}_0(A).
\end{equation}
Using equation \eqref{eq:A seady state}, we obtain the following expression for $\textbf{F}$ 
\begin{equation}\label{eq:Lift}
 \textbf{F}=\frac{2\pi\rho U^2b\theta}{\sqrt{1-M^2}}+\frac{\rho U^2b^5\theta}{2c^4\sqrt{1-M^2}\left(1-\frac{b^2}{2c^2} \right)^2}.
\end{equation}
\begin{figure}[!ht]
\includegraphics[width=5in ,keepaspectratio=true]{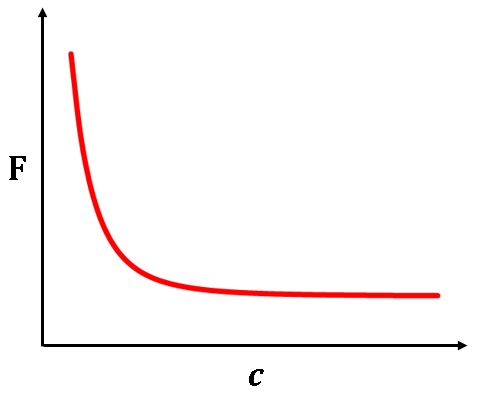}
\centering
\caption{General shape of the lift force profile as a function of the parameter c}
\label{fig:wing structure}
\end{figure}
Note that in equation (\ref{eq:Lift}), the first term on the right hand side represents the aerodynamic lift assuming an open flow and the second term introduces the aerodynamic lift due to the ground effect. Next, the aerodynamic moment $\textbf{M}$ on the airfoil is calculated using the following equation
\begin{equation}\label{}
\mathbf{M}=\rho U\int_{-b}^{b}(x-a)A(x) \, dx=\rho U\mathcal{R}_1(A)+a\mathbf{F},
\end{equation}
where $a$ is the location of the center of rotation of the airfoil.  The aerodynamic moment is given by
\begin{equation}\label{eq:moment}
  \mathbf{M}=\left(\frac{\pi \rho U^2 b^2 \theta}{\sqrt{1-M^2}}+\frac{2a\pi\rho U^2 b \theta}{\sqrt{1-M^2}} \right )+\left(\frac{\rho U^2 \theta b^2}{c^2\sqrt{1-M^2}\left(1-\frac{b^2}{2c^2}\right )}+\frac{a \rho U^2 b^5 \theta}{2c^4\sqrt{1-M^2}\left(1-\frac{b^2}{2c^2} \right )^2} \right ).
\end{equation}

Similar to equation \eqref{eq:Lift}, equation \eqref{eq:moment} separated the open flow moment (first term on the right hand side) from the moment due to the ground effect.
\begin{figure}
\includegraphics[width=5in ,keepaspectratio=true]{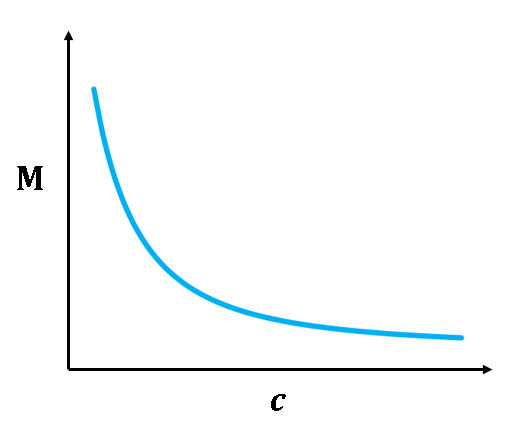}
\centering
\caption{General shape of the moment profile as a function of the parameter c}
\label{fig:wing structure}
\end{figure}
\begin{remark}[Complex analysis approach]
For the steady state incompressible case $(k= 0,M=0)$, the present problem can be studied using complex analytic methods which include Joukowski transformations  to construct  flow potential  and contour integration to obtain the aerodynamic loads \cite{complex analysis}. Despite the relative ease of obtaining flow potential using Joukowski transformations, up to the knowledge of the authors,  there are no closed expressions for the aerodynamic loads in the present case.
\end{remark}
\subsection{Divergence speed}
In this section, the divergence speed $U_{div}$ of a wing structure with thin airfoil (see figure \eqref{fig:wing structure} in subsonic flow is calculated \cite{Transonic Dip, Dowell}.
\begin{figure}[!ht]
\includegraphics[width=5in ,keepaspectratio=true]{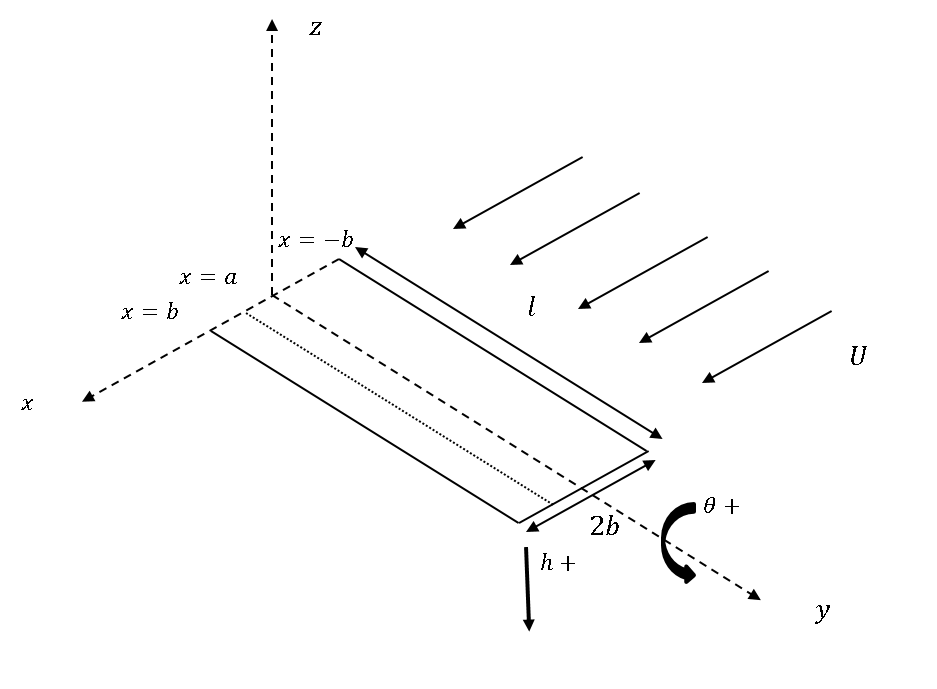}
\centering
\caption{wing configuration}
\label{fig:wing structure}
\end{figure}
The steady state configuration of the wing is governed by the equations
\begin{align}\label{eq:wing equation theta}
-GJ\frac{d^2}{dy^2}\theta&=\mathbf{M},\\
\label{eq:wing equation h}
EI\frac{d^2}{dy^4}h&=-\mathbf{F},
\end{align}
where $0\leq y\leq L$, $L$ is the span of the wing, $\theta(y)$ is the torsion angle and corresponds to the angle of attack in \eqref{eq:downwash steady state case},  $h(y)$ is the deflection of the wing, $GJ$ is the torsional stiffness of the wing,
and $EI$ is the bending stiffness. 

The clamped-free boundary conditions satisfied by $h$ and $\theta$ are given by 
\begin{equation}\label{eq:boundary conditions of the structure}
\theta(0)=h(0)=\left.\frac{d}{dy}\theta\right|_{y=L}=\left.\frac{d^2}{dy^2}h\right|_{y=L}=\left.\frac{d^3}{dy^3}h\right|_{y=L}=0.
\end{equation}
The divergence speed is obtained by solving the eigenvalue problem of finding the minimum free stream velocity that satisfies system \eqref{eq:wing equation theta}--\eqref{eq:wing equation h} and the boundary conditions \eqref{eq:boundary conditions of the structure}. 
The divergence speed is now calculated as the following.  The Moment expression \eqref{eq:moment} is rewritten as 
\begin{equation}\label{}
\mathbf{M}=U^2 \delta \theta,
\end{equation}
where
\begin{equation}\label{}
\delta=\left(\frac{\pi \rho  b^2 }{\sqrt{1-M^2}}+\frac{2a\pi\rho b }{\sqrt{1-M^2}} \right )+\left(\frac{\rho b^2}{c^2\sqrt{1-M^2}\left(1-\frac{b^2}{2c^2}\right )}+\frac{a \rho b^5}{2c^4\sqrt{1-M^2}\left(1-\frac{b^2}{2c^2} \right )^2} \right).
\end{equation}
Then, the governing equation of the torsion is given by
\begin{equation}\label{eq:torsion equation to be solved}
\frac{d^2}{dy^2}\theta+\frac{U^2\delta}{GJ} \theta =0.
\end{equation}
The general solution to equation \eqref{eq:torsion equation to be solved} is 
\begin{equation}\label{}
  \theta(y)=a_1 \sin\left(U \sqrt{\frac{\delta}{GJ}}y\right)+a_2 \cos\left(U \sqrt{\frac{\delta}{GJ}}y\right),
\end{equation}
where $a_1$ and $a_2$ are positive constants that depend on the boundary conditions. For the boundary conditions \eqref{eq:boundary conditions of the structure}, the following relation is obtained
\begin{equation}\label{eq: the formula to calculate the divergence speed}
  \cos\left(U \sqrt{\frac{\delta}{GJ}}L\right)=0\Rightarrow U \sqrt{\frac{\delta}{GJ}}L= \frac{2n+1}{2}\pi, n=\pm1,\pm2,...
\end{equation}
The lowest speed that satisfies equation \eqref{eq: the formula to calculate the divergence speed} (divergence speed) is then given by
\begin{equation}\label{}
U_{div}=\frac{\pi}{2L}\sqrt{\frac{GJ}{\delta}}.
\end{equation}
\begin{remark}[Flutter analysis]
The present work can be extended to obtain lift and moment expressions  for the incompressible transient case $(k\ne 0, M=0)$ and consequently a flutter analysis can be conducted in a fashion similar to \cite{iLiff paper}.  
\end{remark}
\section{Conclusion}
In this paper, the problem of subsonic flow of thin airfoil near the ground is studied. A singular integral equation, namely Possio equation, is derived to relate the pressure jump over the airfoil to its downwash. The existence and uniqueness of the new Possio equation is verified for the case of steady subsonic flow. Moreover, an approximate solution of the derived Possio equation in this particular case is established and closed form expressions of aerodynamic loads and the divergence speed are obtained.

\end{document}